\documentclass[technote, 12pt, letterpaper, draftclsnofoot, onecolumn]{IEEEtran}	

\usepackage[utf8]{inputenc}

\usepackage{amsmath}
\usepackage{amssymb}
\usepackage{subfigure}
\usepackage{graphicx}
\usepackage{graphics}
\usepackage{color}
\usepackage{psfrag}
\usepackage{cite}
\usepackage{balance}
\usepackage{algorithm}
\usepackage{accents}
\usepackage{amsthm}
\usepackage{bm}
\usepackage{url}
\usepackage{algorithmic}
\usepackage[english]{babel}
\usepackage{multirow}
\usepackage{enumerate}
\usepackage{cases}
\usepackage{stfloats}
\usepackage{dsfont}
\usepackage{soul}
\usepackage{amsfonts}
\usepackage{fancyhdr}
\usepackage{hhline}
\usepackage{array}
\usepackage{booktabs}
\usepackage{framed}
\usepackage{float}
\usepackage{soul}

\newtheorem{theorem}{Theorem}

\newtheorem{definition}{Definition}

\newtheorem{lemma}{Lemma}
\newtheorem{corollary}{Corollary}

\newtheorem{proposition}{Proposition}

\newtheorem{assumption}{Assumption}

\newtheorem{remark}{\bf Remark}
\def\phi{\varphi}

\def\l{\left}
\def\r{\right}

\setcounter{page}{1}



\def\bg{{\mathbf{g}}}
\def\bh{{\mathbf{h}}}

\def\br{{\mathbf{r}}}
\def\bs{{\mathbf{s}}}

\def\bu{{\mathbf{u}}}

\def\bw{{\mathbf{w}}}
\def\bx{{\mathbf{x}}}

\def\b0{{\mathbf{0}}}


\def\sT{{\mathsf{T}}}
\def\sH{{\mathsf{H}}}
\def\cmp{{\text{cmp}}}
\def\cmm{{\text{cmm}}}

\def\lo{{\text{lo}}}
\def\gl{{\text{gl}}}

\def\tP{{\widetilde{P}}}

\def\tbh{{\widetilde{\mathbf{h}}}}
\def\tbg{{\widetilde{\mathbf{g}}}}


\def\mC{{\mathbb{C}}}

\def\mE{{\mathbb{E}}}

\def\mI{{\mathbb{I}}}

\def\mR{{\mathbb{R}}}


\def\cD{\mathcal{D}}

\def\cK{\mathcal{K}}

\def\cM{\mathcal{M}}



\newcommand{\var}{\mathsf{Var}}

\newcommand{\tr}{\mathsf{tr}}

\begin{document}

\title{\Large Wirelessly Powered Federated Edge Learning:\\Optimal Tradeoffs Between Convergence and Power Transfer}
\author{
Qunsong Zeng, Yuqing Du, and Kaibin Huang
\thanks{Q. Zeng, Y. Du, and K. Huang are with The University of Hong Kong, Hong Kong.  Contact: K. Huang (huangkb@eee.hku.hk).}
}

\maketitle
\vspace{-2mm}
\begin{abstract}
\emph{Federated edge learning} (FEEL) is a widely adopted framework for training an \emph{artificial intelligence} (AI) model distributively at edge devices to leverage their data  while preserving their data privacy. The execution of a power-hungry  learning task at energy-constrained devices is a key challenge confronting the implementation of FEEL. To tackle the challenge, we propose the solution of powering devices using \emph{wireless power transfer} (WPT). To derive guidelines on deploying the resultant  \emph{wirelessly powered FEEL} (WP-FEEL) system, this work aims at the derivation of the  tradeoff between the model convergence and the settings of power sources in two scenarios: 1) the transmission power and density of power-beacons (dedicated charging stations) if they are deployed, or otherwise 2) the transmission power of a server (access-point). The development of the proposed analytical framework relates the accuracy of distributed stochastic gradient estimation to the WPT settings, the randomness in both communication and WPT links, and devices' computation capacities. Furthermore, the local-computation at devices (i.e., mini-batch size and processor clock frequency) is optimized  to efficiently use the harvested energy for gradient estimation. The resultant learning-WPT tradeoffs reveal the simple scaling laws of the model-convergence rate with respect  to the transferred energy as well as the devices' computational energy efficiencies. The results provide useful guidelines on WPT provisioning to provide a guaranteer on learning performance. They are corroborated  by experimental results using a real dataset. 

\end{abstract}

\section{Introduction}
Recent years have seen a growing trend of deploying machine learning algorithms at the wireless network edge to distill \emph{artificial intelligence} (AI) from the abundant data at edge devices (e.g., sensors and smart phones), giving rise to an area called \emph{edge learning} \cite{xu2019edgeAI,gxzhu_2018_edge_learning}. Among others, \emph{federated edge learning} (FEEL) is perhaps the most widely adopted  framework for its feature of preserving data privacy  \cite{niyato2020feel,wang2019adaptive,deniz2019federated_edge_learning}. Specifically, instead of uploading data from devices, the framework involves a server  distributing  a learning task over devices based on distributed implementation of \emph{stochastic gradient descent} (SGD). One challenge confronting federated learning in practice is that executing  a complex task (e.g., training of a large-scale \emph{convolutional neural network} (CNN)) at edge devices drains their batteries. To tackle this challenge, we propose the  solution of deploying \emph{wireless power transfer} (WPT) to deliver to devices the energy they need for computation and communication. To understand the performance of the resultant \emph{wirelessly powered FEEL} (WP-FEEL) system, this work aims at quantifying  the optimal tradeoffs between model convergence and settings of power sources, which can be power-beacons (charging stations) or the server, when devices optimally allocate harvested energy for computation and communication to accelerate convergence. The derived tradeoffs, termed the optimal \emph{learning-WPT tradeoffs}, yield useful insights into system design and deployment. 

The current research on  implementing FEEL in wireless networks  can be separated into two main thrusts focusing on tackling two different challenges. One challenge is the communication bottleneck arising from the wireless uploading of high-dimensional model updates (either local models or local stochastic gradients) from many devices. Attempts to overcome the bottleneck have led to the design of a new class of communication techniques for efficient FEEL including over-the-air updates aggregation  \cite{deniz2019federated_edge_learning,gxzhu2018FEEL,yang2019aircomp}, resource management \cite{chen2019joint,jinke2021,dingzhu2020}, adaptive uploading frequency control \cite{wang2019adaptive}, device scheduling \cite{yang2020}, and quantization \cite{yq2020quantization}. The other challenge is to execute energy consuming tasks at edge devices as mentioned earlier. This issue has been addressed in a series of works on designing techniques for jointly managing computation and communication resources \cite{sun2020energy,yang2019energy,mo2020energyefficient,qs2020cpu-gpu} under the criterion of minimizing the total devices' energy consumption during the learning process. Addressing the same issue, we propose an alternative and direct approach of powering devices using WPT.  Though there exist rich convergence analysis in the prior work, the  tradeoffs between energy consumption of devices  and convergence have not yet been crystallised. The derivation of the desired learning-WPT tradeoff is even more complex due to new issues arising from WPT especially the following two. First, the unreliabilities of  both communication and WPT links jointly affect the number of active devices. Second,  each device needs manage harvested energy for both communication and computation. Their coupling results in the channel dependence of  local computation (i.e., mini-batch size and processor frequency) and hence learning performance.

There exists a rich literature on the application of microwave based WPT to power different types of wireless networks ranging from communication networks to sensor networks to those supporting mobile-edge computing (see recent surveys in \cite{clerckx2019fundamental,clerckx2021wireless}). There exist three main topologies of wirelessly powered networks \cite{kaibin2015magazine}: 1) the integration of WPT with downlink transmission, called \emph{simultaneous wireless information and power transfer} (SWIPT) (see e.g., \cite{zhang2013mimo}), 2) downlink WPT to power uplink transmission (see e.g., \cite{ju2014WPT}), and 3) separated WPT served by power-beacons and radio access by base stations (see e.g., \cite{kaibin2014powerbeacon}). As the communication bottleneck  of a FEEL system lies in the uplink,  the last two topologies are relevant,  both of which are  considered in this paper. Despite building on the existing network topologies, WP-FEEL systems differ from their conventional wirelessly powered communication systems in several aspects. First, the performance of the former is measured using learning related metrics (i.e., convergence rate or test accuracy) and that of the the latter is measured by communication related metrics such as   throughput (see e.g., \cite{ju2014WPT}), communication energy efficiency (see e.g., \cite{energy-efficiency}), and  rate-harvested-energy tradeoff (see e.g., \cite{rate-energy-tradeoff}). Second, the devices in a WP-FEEL system are workers cooperating  in training a global model while those in a communication system are subscribers competing for power transfer and the use of radio resources. Third, computing power consumption is either neglected or abstracted as a constant for conventional systems focusing on communication (see e.g., \cite{deniz2019federated_edge_learning,sun2020energy,chen2019joint}). In contrast, such consumption is at least comparable with its communication counterpart  in a WP-FEEL system performing a computation intensive task. Thus, an elaborate model of the former is adopted in this work so that the analysis can be  of practical relevance. 

The above distinctions between  WP-FEEL systems and their conventional counterparts give rise to new challenges in designing and analyzing the former. To tackle the  challenges, the main contribution of this work is the development of a novel analytical framework for quantifying  the optimal learning-WPT tradeoff of a WP-FEEL system. As a by-product, a scheme for the optimal control of local computation at devices is designed.  The framework is first developed for the scenario where dense power-beacons are deployed to provide reliable WPT without fading, referred to as the \emph{beacon-WPT} \cite{kaibin2014powerbeacon}. The key components of the framework and relevant findings are described as follows. 

\begin{enumerate}
\item {\bf Distributed Gradient Estimation}: \emph{Global and local gradient deviations} are respectively defined as the expected deviation of a local gradient estimate and the global estimate from the ground truth computed using the global/local datasets. In existing convergence analysis, they are usually studied under the following assumptions:  1) i.i.d. data distributions at devices, 2) uniform mini-batch sizes,  and 3) a fixed number of active devices (see e.g., \cite{zhu2020onebit,yu2019parallel,basu2019qsparse,koloskova2019decentralized,ijcai2018}). While the first assumption lacks generality, the last two do not hold for  the WP-FEEL system featuring  random harvested energy, the mentioned channel-dependent heterogeneous   computation capacities, and a random number of active  devices. To address the issue, we define a generalized  system of global and local gradient deviations by relaxing the assumptions. By analyzing these measures, the convergence rate is related to the distribution of the set of active devices as well as the derived probability of a \emph{computation-outage event}, which occurs when  a device fails to harvest sufficient energy to support both communication and computation and hence becomes inactive. 

\item {\bf Local-Computation Optimization}: Consider an active device and an arbitrary  round. After reserving sufficient transmission energy, the remaining harvested energy is used for local computation. Under the energy constraint, the mini-batch size and processor's computing speed are jointly optimized to minimize the local gradient deviation. They are shown to both increase \emph{sub-linearly} with the computation energy   and be inversely  proportional to the device's computation capacity. In addition, the optimal mini-batch size is also inversely proportional to the workload for local gradient computation. 

\item {\bf Optimal Learning-WPT Tradeoff}: The tradeoff is derived based on characterizing the effects of WPT on distributed gradient estimation and devices' computation-and-communication  capacities.  Define the \emph{spatial energy density} for beacon-WPT as the total energy transferred from beacons to a randomly located device per round, denoted as $\lambda_{\text{energy}}$. The difference between the convergence rate and  its  limit in the ideal case of using the global dataset is found to be inversely proportional to a sub-linear function of  spatial energy density, namely $O\left(\lambda_{\text{energy}}^{-\frac{1}{3}}\right)$. The  result provides some guidelines on power-beacon deployment (i.e., power and density) to provide a guaranteer on the learning performance. Moreover, the difference is shown to decay as  a weighted sum of sub-linear functions of individual computation energy efficiencies (i.e., required energy for processing a data sample). Each weight depends on the usefulness of a  local dataset and specifically is the local gradient deviation for a single sample. The result suggests the need of considering devices' computation energy efficiencies in WPT provisioning. 

\end{enumerate} 

The framework  is extended to the other scenario of \emph{server-WPT}, where the server   transfers power to devices over fading channels \cite{ju2014WPT}. In particular, the scaling laws described above remain the same except that the spatial-energy density is replaced with the energy beamed by the server to each device in a specific  round.

The remainder of the paper is organized as follows. Mathematical models are introduced in Section II. In Section III, distributed gradient estimation is analyzed to relate the convergence rate to the distribution and computation capacities of active devices. The optimal learning-WPT tradeoff for beacon-WPT is derived in Section IV and extended to the scenario of server-WPT  in Section V. Experimental results are presented  in Section VI, followed by concluding remarks in Section VII.

\section{Mathematical Models}\label{system model}
We consider a single-cell WP-FEEL system in a circular cell with the radius denoted as $R$. A server equipped with an array of $L$ antennas  coordinates FEEL over  $K$ single-antenna edge devices, represented  by the  index set $\mathcal{K}=\{1,\cdots,K\}$. Devices are assumed to have high mobility and their locations are uniformly distributed in the cell and  i.i.d. over rounds. The devices are powered by either beacon-WPT or server-WPT as illustrated  in Fig.~\ref{Fig: WPT systems}. In each round with a fixed duration $T$, each device first computes a local gradient and then transmits it to the server. Then each round is divided in two phases: local computation and gradient uploading (see Fig.~\ref{Fig: WPT systems}), which last  ${T^{\cmp}}$ and $T^{\cmm}$ seconds, respectively. The operations of devices are synchronized, resulting in the following time constraints for edge devices:
\begin{equation}\label{Eqn: latency constraints}
    0<t_k^{\cmp}\leq {T^{\cmp}}\quad\text{and}\quad 0<t_k^{\cmm}\leq T^{\cmm},~\forall k\in\mathcal{K},
\end{equation}
where $t_k^{\cmp}$ and $t_k^{\cmm}$ are the computation and transmission time at device $k$ in one round, respectively. Let $E_k$ denote the amount of energy harvested by  device $k$; its computation and communication energy consumptions are represented by  $E_k^{\cmp}$ and $E_k^{\cmm}$, respectively. The harvested energy is fixed for beacon-WPT and varies over rounds for server-WPT as elaborated in the sequel. They satisfy  the following energy constraint:\footnote{To be precise, the idling circuit energy consumption, denoted as a constant $\zeta$, exists even when there is no computation and transmission. In this case, the energy constraint is $E_k^{\cmp}+E_k^{\cmm}+\zeta\leq E_k$. We omit the constant $\zeta$ as it is negligible compared with $E_k^{\cmp}$ and $E_k^{\cmm}$.}
\begin{equation}\label{Eqn: energy constraint}
    E_k^{\cmp}+E_k^{\cmm}\leq E_k,~\forall k\in\cK. 
\end{equation}
The detailed system operations and relevant models are described as follows. 

\subsection{Two WPT Models}
\subsubsection{Beacon-WPT}
Consider the WP-FEEL system powered by beacon-WPT in  Fig.~\ref{Fig: WPT systems}(a). Given their low cost and complexity, dense power-beacons are deployed to power devices over short-range WPT links without fading. The beacons  are modelled as a homogeneous \emph{Poisson point process} (PPP), denoted as $\Psi=\{\bs\}$ with density $\lambda_{\text{pb}}$, where $\bs\in\mR^2$ represent the coordinate of a single beacon. Each device is equipped with an energy harvester comprising a rectifying antenna and a battery \cite{kaibin2015magazine}. Moreover, WPT is over a dedicated frequency  outside the communication band. These allow the device to continuously harvest energy throughout the learning process [see Fig.~\ref{Fig: WPT systems}(a)]. Let the coordinates of device $k$ in the $i$-th round be denoted by $\br_k^{(i)}$ and thus the communication range  $r_k^{(i)} = |\br_k^{(i)}|$. Adopting a short-range propagation model \cite{baccelli2006MPTmodel},  the instantaneous power received at the device $k$ in round $i$ is given as
\begin{equation}
    P_k^{(i)}=\rho \bar{P}\sum_{\bs\in\Psi}\l(\max\{|\br_k^{(i)}-\bs|,\nu\}\r)^{-\beta},~\forall k\in\mathcal{K},
\end{equation}
where $\nu\geq1$ is a given constant avoiding singularity, $\beta>2$ is the path-loss exponent, $\bar{P}$ is the transmission power of power-beacons, and $\rho$ represents the product of energy-conversion efficiency and energy-beamforming gain. As the power-beacons are dense and homogeneously distributed in the cell,  the amount of harvested energy at each device in one round can be approximated as \cite{kaibin2014powerbeacon}
\begin{equation}\label{Eqn: power-beacon energy}
    E_k^{(i)}\approx\bar{E}=\frac{\pi\beta\rho\bar{P}\lambda_{\text{pb}}T}{(\beta-2)\nu^{\beta-2}},~\forall k\in\mathcal{K}.
\end{equation}
\begin{definition}{(Spatial-Energy Density)} \emph{The spatial-energy density is defined as  $\lambda_{\text{energy}} \triangleq \bar{P}\lambda_{\text{pb}}T$, which is proportional to the beacon density and transmission power. It can be interpreted as the amount of energy delivered by the power-beacon network to an arbitrarily located device in a single round.} 
\end{definition}

\begin{figure}[t!]
    \centering
    \subfigure[Beacon-WPT System and Operations]{\label{Fig: beacon-WPT}
    \includegraphics[width=0.98\textwidth]{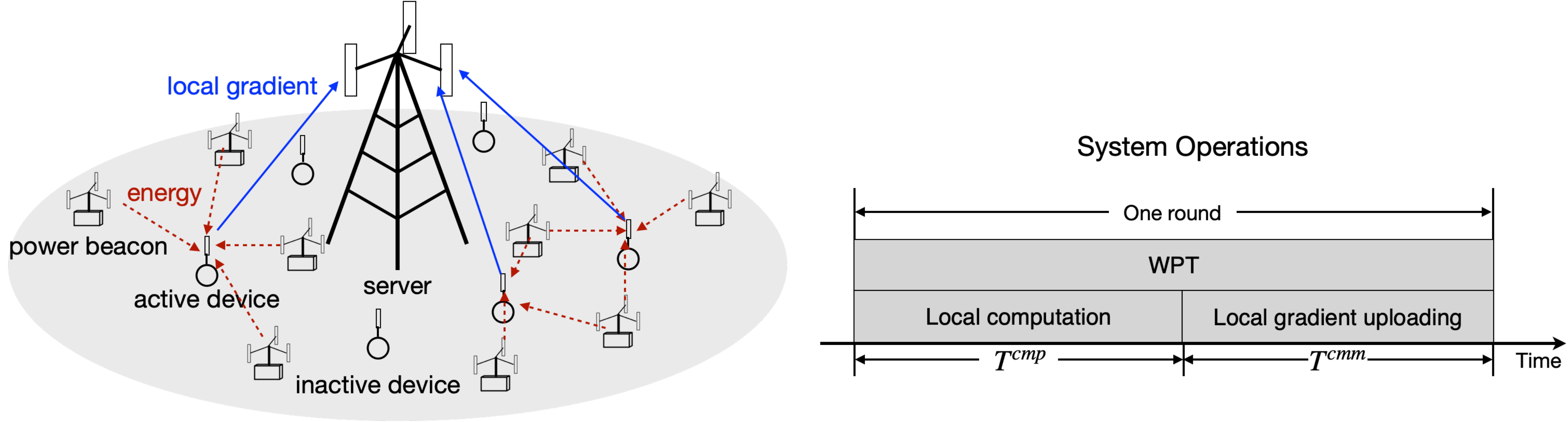}}
    \subfigure[Server-WPT System and Operations]{\label{Fig: server-WPT}
    \includegraphics[width=0.98\textwidth]{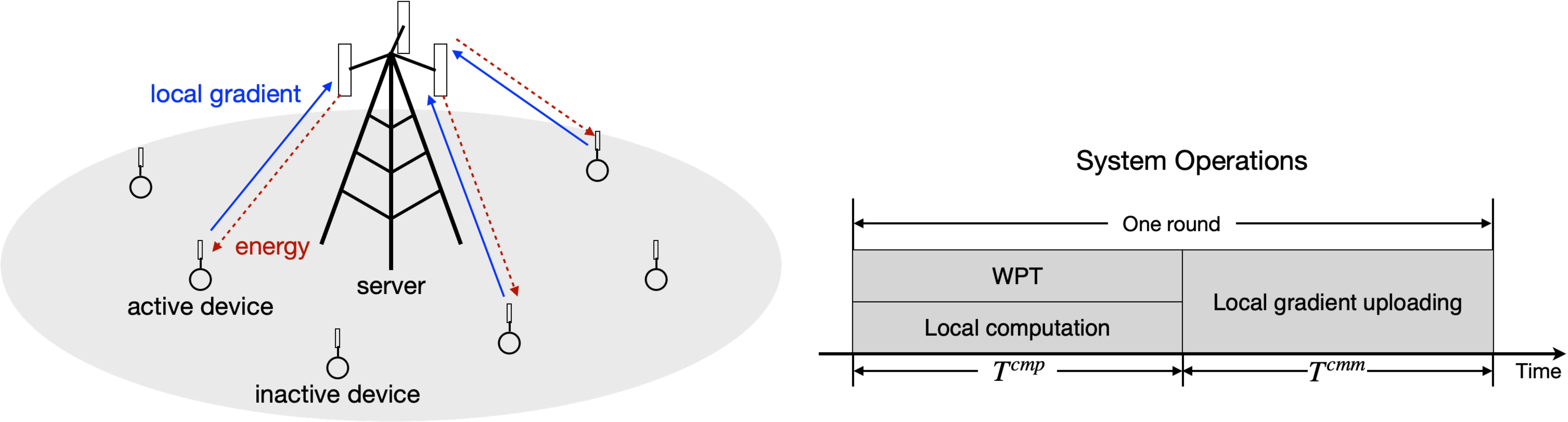}}
    \caption{WP-FEEL systems and operations: (a) beacon-WPT and (b) server-WPT.}
    \label{Fig: WPT systems}
\end{figure}

\subsubsection{Server-WPT}
In the absence of power-beacons, devices can be also powered by the server over long-range WPT links with fading as illustrated in  Fig.~\ref{Fig: WPT systems}(b). The server is assumed to be half-duplex and thus can perform WPT only during the local-computation phase in each round when its array is not used for communication [see Fig.~\ref{Fig: WPT systems}(b)].  Let the isotropic complex Gaussian vector  $\tbh_k^{(i)}\in\mC^{L\times 1}$ represent the Rayleigh fading channel of the WPT  link from the server to device $k$. Moreover,  let $\bu_k^{(i)}\in\mC^{L\times 1}$ with $(\bu_k^{(i)})^{\sH}\bu_k^{(i)}=1$ denote the energy-beamforming  vector, and $\tP_k^{(i)}$ the transfer power allocated to device $k$. With energy beamforming, the amount of harvested energy by device $k$ in each round is given as 
\begin{equation}
    E_k^{(i)}=\rho \big(r_k^{(i)}\big)^{-\alpha}\|\tbh_k^{(i)}\|^2\tP_k^{(i)} T^{\cmp},~\forall k\in\mathcal{K}. 
\end{equation}
Last, the WPT channels are assumed to be i.i.d. over rounds and furthermore independent of the uplink channels since they are in  different frequency bands. 

\subsection{Federated Learning Model}
A standard federated learning framework is considered as follows (see e.g., \cite{mcmahan2017federatedlearning}). A global model, represented by the parametric vector  $\mathbf{w}\in\mR^q$ with $q$ denoting  the model size, is trained collaboratively across the edge devices by leveraging local labelled datasets. For device $k$, let $\mathcal{D}_k = \{(\mathbf{x}_j,y_j)\}$ denote the local dataset where $\mathbf{x}_j$ and $y_j$ represent the raw data and label of the $j$-th sample. The \emph{local loss function} is defined as
\begin{equation}\label{Eqn: local loss function}
    F_k(\mathbf{w})=\frac{1}{|\mathcal{D}_k|}\sum_{(\mathbf{x}_j,y_j)\in\mathcal{D}_k}\ell(\mathbf{w};(\mathbf{x}_j,y_j)),
\end{equation}
where $\ell(\mathbf{w};(\mathbf{x}_j,y_j))$ is the sample-wise loss function quantifying the prediction error of the model $\mathbf{w}$ on the training sample $\mathbf{x}_j$ with reference to its true label $y_j$. For convenience, we denote $\ell(\mathbf{w};(\mathbf{x}_j,y_j))$ as $\ell_j(\mathbf{w})$ and assume uniform sizes for local datasets: $|\mathcal{D}_k|=D,\forall k\in\mathcal{K}$. Then the \emph{global loss function} on all the distributed datasets can be written as
\begin{equation}\label{eqn: global loss function}
    F(\mathbf{w})=\frac{\sum_{k=1}^K\sum_{j\in\mathcal{D}_k}\ell_j(\mathbf{w})}{\sum_{k=1}^K|\mathcal{D}_k|}=\frac{1}{K}\sum_{k=1}^KF_k(\bw).
\end{equation}
Its gradient $\nabla F(\bw^{(i)})$ is referred to as the \emph{ground-truth gradient}. The learning process is to minimize the global loss function $F(\bw)$. To this end, each round aims at estimating $\nabla F(\bw^{(i)})$ distributively to facilitate SGD. 

We adopt the existing gradient-averaging implementation of FEEL with the key operations illustrated in Fig.~\ref{Fig: system model} and described as follows (see e.g., \cite{jinke2021}). In each round, say the $i$-th round, the server broadcasts the current model $\bw^{(i)}$ to all edge devices. Due to channel fading, only a subset of  devices, denoted as a subset $\cM^{(i)}\subseteq\cK$ with size $M^{(i)}=|\cM^{(i)}|$, can participate in learning in this specific round. Each device in $\cM^{(i)}$ computes a local estimate of the gradient of its local loss function by randomly sampling  its local dataset $\mathcal{D}_k$. We denote the sampled mini-batch local dataset as $\mathcal{B}_k^{(i)}$ whose size is denoted by $b_k^{(i)}= |\mathcal{B}_k^{(i)}|$. The local gradient at device $k$ in the $i$-th round is estimated using the mini-batch as
\begin{equation}
    \bg_k^{(i)}=\frac{1}{b_k^{(i)}}\sum_{(\bx_j,y_j)\in\mathcal{B}_k^{(i)}}\nabla \ell_j(\bw^{(i)}).\label{eqn: stochastic gradient}
\end{equation}
Upon completion, the local gradient estimates are sent by active devices to the server for aggregation. Upon receiving them, the global gradient is calculated as
\begin{equation}\label{eqn: gradients aggregation}
    \bg^{(i)}=\begin{cases}\frac{1}{M^{(i)}}\sum\limits_{k\in\cM^{(i)}}\bg_k^{(i)},&M^{(i)}>0\\\mathbf{0},&M^{(i)}=0\end{cases}.
\end{equation}
Subsequently, the global model is then updated using SGD as
\begin{equation}\label{eqn: update rule}
    \bw^{(i+1)}=\bw^{(i)}-\eta\bg^{(i)},
\end{equation}
where $\eta$ is the given  learning rate. The process iterates until the model converges. In the process, the accuracy of distributed gradient estimation can be measured using the following metrics. 

\begin{figure}[t!]
    \centering
    \includegraphics[width=0.98\textwidth]{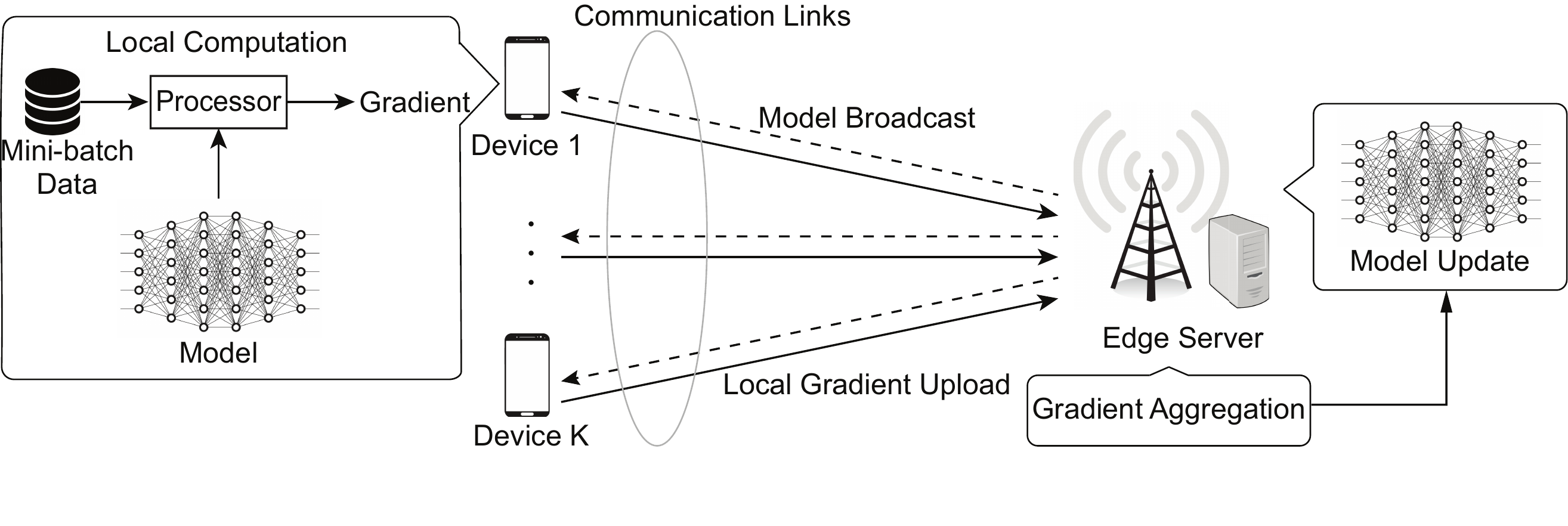}
    \vspace{-10mm}
    \caption{FEEL Operations.}
    \label{Fig: system model}
\end{figure}

\begin{definition}\label{definition: gradient deviation}
(Local and Global Gradient Deviations). \emph{In the $i$-th round, the local gradient deviation at device $k$, denoted as     $G_{\lo,k}^{(i)}$,  refers to the mean-square error between the local gradient estimate and its ground-truth:
\begin{equation}\label{Eqn: local gradient deviation definition}
    G_{\lo,k}^{(i)}=\mE\l[\|\bg_k^{(i)}-\nabla F_k(\bw^{(i)})\|^2\r].
\end{equation}
The global gradient deviation refers to the expected deviation between the aggregated  local gradient estimates at  the server and the ground truth: 
\begin{equation}\label{Eqn: gradient deviation definition}
    G_{\gl}^{(i)}=\mE_{\cM^{(i)}}\l\{\mE\l[\|\bg^{(i)}-\nabla F(\bw^{(i)})\|^2\r]\r\},
\end{equation}
where the outer expectation is taken over rounds and the distributions of $M^{(i)}$ and $b_k^{(i)}$.
}
\end{definition}

\subsection{Local-Computation Model}
The computation-energy consumption depends on two variables: 1) the mini-batch size  and 2) the processor's clock frequency. Adopting a standard model in computer engineering \cite{zhang2018flops}, we define the per-sample workload $W$ for local-gradient estimation as the number of \emph{floating point operations} (FLOPs)  needed for processing each data sample. This gives the workload at device $k$ in the $i$-th round as $W_{k,{\sf total}}^{(i)}=b_k^{(i)}\times W$.  Let $f_{{\sf clk},k}^{(i)}$ [in cycle/s] represent  the clock frequency of the processor (e.g., CPU or GPU) at device $k$ in round $i$.  As a result, the computing speed of the processor, measured in FLOPs per second, can be defined as $f_k^{(i)}=f_{{\sf clk},k}^{(i)}\times {N}_k^{\sf FLOP}$ with ${N}_k^{\sf FLOP}$ denoting the number of FLOPs it can process per cycle.  Given the workload and computing speed, the local computation time at device $k$, denoted as ${t_k^{\cmp}}^{(i)}$, is given by
\begin{equation}\label{Eqn: local computation time}
    {t_k^{\cmp}}^{(i)}=\frac{b_k^{(i)}W}{f_k^{(i)}},~\forall k\in\mathcal{K}.
\end{equation} 
For a CMOS circuit, the power consumption of a processor can be modelled as a function of clock frequency: $P=\Psi f_{\sf clk}^3$, where $\Psi$ [in $\text{Watt}/(\text{cycle/s})^3$] is a constant depending  on the chip architecture \cite{liu2012dvfs}. Based on this model, the power consumption of the processor at device $k$ can be written as 
\begin{equation}\label{Eqn: local computation power}
    {P_k^{\cmp}}^{(i)}=\Psi_k\big(f_{{\sf clk},k}^{(i)}\big)^3=C_k\big(f_k^{(i)}\big)^3,~\forall k\in\mathcal{K},
\end{equation}
where the coefficient $C_k=\Psi_k/\l(N_k^{\sf FLOP}\r)^3$ characterizes the computation property of the processor. In particular, a smaller value indicates that the processor is capable to compute more workload given energy consumption per unit time, or consume less energy given the workload per unit time. Given  \eqref{Eqn: local computation time} and  \eqref{Eqn: local computation power}, the resultant energy consumption at device $k$ is given as
\begin{equation}\label{eqn: computation energy}
    {E_k^{\text{cmp}}}^{(i)}={P_k^{\text{cmp}}}^{(i)}\times {t_k^{\cmp}}^{(i)}=b_k^{(i)}C_kW\big(f_k^{(i)}\big)^2,~\forall k\in\mathcal{K}.
\end{equation}

\subsection{Transmission Model}
Without loss of generality, consider uploading by device $k$ in the $i$-th round. Each gradient coefficient  is compressed into $Q$ bits such that the effect of quantization on the learning performance is negligible.  Then the overhead of transmitting a $q$-dimensional vector  is $q\times Q$ bits. The uplink bandwidth is equally divided into $K$ narrow sub-bands of $B$ and allocated to  the devices for orthogonal transmission. Let the  complex Gaussian vector $\bh_{k}^{(i)}$ comprising i.i.d. $\mathcal{CN}(0, 1)$ coefficients  represent  the  Rayleigh fading channel of the considered devices. Channels of different devices are assumed independent of each other.  Given receive beamforming at the server, the transmission rate for device $k$ in round $i$ can be written as
\begin{equation}
    S_k^{(i)}=B\log_2\left(1+\frac{\|\mathbf{h}_k^{(i)}\|^2{P^{\cmm}_k}^{(i)}}{\big(r_k^{(i)}\big)^{\alpha}BN_0}\right),~\forall k\in\mathcal{K}.
\end{equation}
where ${P_k^{\cmm}}^{(i)}$ represents the transmission power, $N_0$ the power spectrum density of the additive white Gaussian noise,  $r_k^{(i)}$ the propagation distance,  and $\alpha$ the path-loss exponent.  The transmission rate is required to support uploading of $q\times Q$ bits in a single round. This places the following  constraint on the transmission power:
\begin{equation}
    {P_k^{\text{cmm}}}^{(i)}=\frac{\big(r_k^{(i)}\big)^{\alpha}BN_0}{\|\mathbf{h}_k^{(i)}\|^2}\left(2^{\frac{qQ}{B{t_k^{\cmm}}^{(i)}}}-1\right),~\forall k\in\mathcal{K}.
\end{equation}
The resultant  transmission-energy consumption is 
\begin{equation}\label{eqn: transmission energy}
    {E_k^{\cmm}}^{(i)}={P_k^{\cmm}}^{(i)}\times{t_k^{\cmm}}^{(i)}=\frac{\big(r_k^{(i)}\big)^{\alpha}}{\|\mathbf{h}_k^{(i)}\|^2}\phi({t_k^{\cmm}}^{(i)}),~\forall k\in\mathcal{K},
\end{equation}
where the function $\varphi(t)\triangleq BN_0t\l(2^{\frac{qQ}{Bt}}-1\r)$.

\section{Convergence and Distributed Gradient Estimation}
Consider the WP-FEEL system  with  beacon-WPT. In this section, we aim at analyzing the relation between   convergence and  several key system variables influencing the accuracy of distributed gradient estimation, including the mini-batch sizes of devices, number of active devices, and computation-outage probability. Such results are useful for deriving the learning-WPT tradeoff in the next section. As direct analysis is difficult, a tractable approach is adopted using  global and local  gradient deviations as intermediate variables. 
\subsection{Convergence and Global Gradient Deviation}
To quantify the relation, we follow the literature to make several standard assumptions on the loss function and local estimated gradients as follows (see e.g., \cite{wang2019adaptive,bernstein2018signsgd,yu2019parallel,basu2019qsparse,koloskova2019decentralized,sattler2020noniid,zhu2018nonconvex,bottou2018}).
\begin{assumption}\label{assumption: smoothness}
(Smoothness). \emph{The loss function $F:\mathbb{R}^q\to\mathbb{R}$ is  \emph{$\mu$-smooth}. Specifically, for all $(\mathbf{u},\mathbf{v})\in\mathbb{R}^q\times\mR^q$, 
\begin{equation}
    F(\mathbf{u})\leq F(\mathbf{v})+\langle\nabla F(\mathbf{v}),\mathbf{u}-\mathbf{v}\rangle+\frac{\mu}{2}\|\mathbf{u}-\mathbf{v}\|^2,
\end{equation}
where $\nabla$ is the differential operator and $\langle\cdot,\cdot\rangle$ represents the inner product.
}
\end{assumption}
While  i.i.d. data distributions over  devices are commonly assumed in the literature for simplicity   (see e.g., \cite{zhu2020onebit}), we consider the general and more practical case of \emph{non-i.i.d.} data distribution as in \cite{sattler2020noniid}. For the case of i.i.d. distributions, local gradients are unbiased with respect to the global full-batch gradient:  $\mathbb{E}[\mathbf{g}_k]=\nabla F(\mathbf{w}),\forall k\in\mathcal{K}$, where the expectation is taken over the data distribution at device $k$. On the other hand, for the current case of non-i.i.d. distributions,  we make the following assumption on local gradient estimates \cite{sattler2020noniid}. 

\begin{assumption}\label{assumption: local gradients estimation}
(Local Gradients Estimation). \emph{The stochastic gradient estimates $\{\mathbf{g}_k\}$ defined in \eqref{eqn: stochastic gradient} are unbiased estimates of its local gradients $\{\nabla F_k(\mathbf{w})\}$ defined in \eqref{Eqn: local loss function} and computed using full local datasets, and independent of each other: 
\begin{equation}
    \mathbb{E}[\mathbf{g}_k]=\nabla F_k(\mathbf{w}),~\forall k\in\mathcal{K},
\end{equation}
where the expectation is taken over local data distribution $\cD_k$. 
}
\end{assumption}
It should be reiterated  that local  gradients are not equal to the global gradient and their relation is specified in \eqref{eqn: global loss function}.  Our analysis does not require the convexity assumption for the  loss function and only requires it to be lower bounded as formally stated below, which is the minimal assumption needed for ensuring convergence to a stationary point \cite{zhu2018nonconvex}.
\begin{assumption}\label{assumption: bounded loss function}
(Bounded Loss Function). \emph{For any parameter vector $\mathbf{w}$, the loss function $F(\bw)$ is  lower bounded by a given  scalar $F_*$.
}
\end{assumption}
The last assumption given below is also standard in the literature   (see e.g., \cite{yu2019parallel,basu2019qsparse,koloskova2019decentralized}).

\begin{assumption}\label{assumption: bounded gradient norm}
(Bounded Gradient Norm). \emph{The expected squared norm of stochastic gradients is uniformly bounded by a constant $\Phi$, that is, $\mE\l[\|\bg_k^{(i)}\|^2\r]\leq\Phi$, $\forall k\in\cK$ and $\forall i$.
}
\end{assumption}


We adopt a widely metric for measuring the convergence rate of FEEL with a non-convex loss function, namely the \emph{expected average  gradient norm} (over rounds) \cite{ijcai2018,bottou2018,bernstein2018signsgd}. Based on the above assumptions, we prove that expected  average  gradient norm can be bounded by the average global gradient deviation  as shown in the following proposition, thereby relating convergence to gradient estimation.  

\begin{proposition}\label{proposition: convergence rate} 
\emph{Given the learning rate satisfying $0<\eta\leq\frac{1}{\mu}$, the expected convergence rate of the FEEL algorithm can be upper-bounded by the average global gradient deviation as follows 
\begin{equation}\label{Eqn: convergence rate}
\begin{aligned}
    \mE\l[\frac{1}{N}\sum_{i=0}^{N-1}\|\nabla F(\bw^{(i)})\|^2\r]&\leq\frac{2\l[F(\bw^{(0)})-F_*\r]}{\eta N}+ \frac{1}{N}\sum_{i=0}^{N-1}G_{\gl}^{(i)}.\\
\end{aligned}
\end{equation}
where $G_{\gl}^{(i)}$ is  defined in \eqref{Eqn: gradient deviation definition}.
}
\end{proposition}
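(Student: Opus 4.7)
My plan is the standard non-convex SGD descent argument, adapted to a stochastic aggregated gradient whose mean-square error from the ground truth is precisely $G_{\gl}^{(i)}$. First I would apply Assumption~\ref{assumption: smoothness} with $(\mathbf{u},\mathbf{v})=(\bw^{(i+1)},\bw^{(i)})$ and substitute the update rule \eqref{eqn: update rule} to obtain a per-round descent inequality whose right-hand side contains a linear term $-\eta\langle\nabla F(\bw^{(i)}),\bg^{(i)}\rangle$ and a quadratic term $\tfrac{\mu\eta^2}{2}\|\bg^{(i)}\|^2$.

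The next step is to decompose the cross term via the polarization identity $-2\langle\ba,\bb\rangle=\|\ba-\bb\|^2-\|\ba\|^2-\|\bb\|^2$ with $\ba=\nabla F(\bw^{(i)})$ and $\bb=\bg^{(i)}$. After combining with the smoothness quadratic, the right-hand side becomes $F(\bw^{(i)})$ plus $\tfrac{\eta}{2}\|\nabla F(\bw^{(i)})-\bg^{(i)}\|^2$ (whose expectation equals $G_{\gl}^{(i)}$ by \eqref{Eqn: gradient deviation definition}), minus $\tfrac{\eta}{2}\|\nabla F(\bw^{(i)})\|^2$ (which will form the left-hand side of \eqref{Eqn: convergence rate}), and minus $\tfrac{\eta(1-\mu\eta)}{2}\|\bg^{(i)}\|^2$.

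Handling that last $\|\bg^{(i)}\|^2$ term is the main obstacle and the reason the learning-rate hypothesis enters. Because $0<\eta\le 1/\mu$, its coefficient is non-positive and the term can be discarded from the upper bound without ever invoking Assumption~\ref{assumption: bounded gradient norm}, and without needing unbiasedness of $\bg^{(i)}$ with respect to the global gradient (which in fact fails under the non-i.i.d.\ data model of Assumption~\ref{assumption: local gradients estimation}). Taking expectation over all sources of randomness---mini-batch sampling, the active-device set $\cM^{(i)}$, and the communication and WPT links---then yields a one-step recursion that bounds $\tfrac{\eta}{2}\mE\|\nabla F(\bw^{(i)})\|^2$ by the function-value drop $\mE[F(\bw^{(i)})]-\mE[F(\bw^{(i+1)})]$ plus $\tfrac{\eta}{2}G_{\gl}^{(i)}$.

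Finally I would telescope this recursion over $i=0,\ldots,N-1$, so the function-value terms collapse to $F(\bw^{(0)})-\mE[F(\bw^{(N)})]$, invoke Assumption~\ref{assumption: bounded loss function} to replace the latter by $F_*$ from above, and divide through by $N$, recovering exactly \eqref{Eqn: convergence rate}. Note that the non-i.i.d.\ data structure is not untangled at this stage: its entire effect on convergence is hidden inside $G_{\gl}^{(i)}$, which the subsequent sections will decompose into a computation-outage contribution and the local gradient deviations $G_{\lo,k}^{(i)}$.
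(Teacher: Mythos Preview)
Your proposal is correct and follows essentially the same descent-lemma-plus-telescoping strategy as the paper's proof in Appendix~\ref{proof: convergence rate}. The only cosmetic difference is algebraic: the paper re-expresses $\|\bg^{(i)}\|^2$ and then bounds the residual cross term $-\langle\nabla F(\bw^{(i)}),\bg^{(i)}-\nabla F(\bw^{(i)})\rangle$ via Young's inequality, whereas your polarization identity reaches the identical per-step inequality $\|\nabla F(\bw^{(i)})\|^2\le \tfrac{2}{\eta}\bigl(F(\bw^{(i)})-F(\bw^{(i+1)})\bigr)+\|\bg^{(i)}-\nabla F(\bw^{(i)})\|^2$ in one fewer step by directly discarding the nonpositive $-\tfrac{\eta(1-\mu\eta)}{2}\|\bg^{(i)}\|^2$ term.
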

\begin{proof}
See Appendix \ref{proof: convergence rate}.
\end{proof}

\subsection{Computation-Outage Probability}
The computation-outage probability that affects the global gradient deviation is derived as follows. Without loss of generality, consider device $k$. 

\begin{definition}\label{definition: outage event}
(Computation-Outage Event). \emph{The event occurs at device $k$ in the $i$-th round when its harvested energy $\bar{E}$ is no larger than  the required  transmission energy $E_k^{\cmm}(r_k^{(i)},\bh_k^{(i)})$ given the propagation distance $r_k^{(i)}$ and fading channel $\bh_k^{(i)}$. As a result, there is zero energy for computation and  device $k$ is inactive in round $i$:  $k\notin \cM^{(i)}$.
}
\end{definition}
It is well known that the gain of the  Rayleigh fading channel,   $\|\bh_k^{(i)}\|^2$,  follows the $\chi^2$-distribution with the following   \emph{probability density function} (PDF): 
\begin{equation}\label{Eqn: channel pdf}
    f_{\|\bh_k^{(i)}\|^2}(h)=\frac{h^{L-1}e^{-h}}{\Gamma(L)},\quad h\geq 0,
\end{equation}
where $\Gamma(\cdot)$ is the Gamma function. On the other hand, since each  device is uniformly distributed in the cell, the transmission   distance of   device $k$  has  the following  PDF:
\begin{equation}\label{Eqn: location pdf}
    f_{r_k^{(i)}}(r)=\frac{2r}{R^2},\quad 0\leq r\leq R.
\end{equation}
Next, since transmission energy  is a monotone decreasing function of the transmission duration, to minimize energy consumption requires the use of the maximum transmission duration: ${t_k^{\cmm}}^{(i)} = T^{\cmm}$. Then the  required transmission energy  in round $i$ is  $E_k^{\cmm}(r_k^{(i)},\bh_k^{(i)})=\frac{(r_k^{(i)})^{\alpha}}{{\|\bh_k^{(i)}\|^2}}\varphi(T^{\cmm})$.  Using the above results, the computation-outage probability is derived as follows.

\begin{lemma}\label{lemma: outage probability}
(Computation-Outage Probability). \emph{The  probability is identical for all devices and all round and  given as
\begin{equation}\label{Eqn: outage probability}
    P_{\text{out}}=\frac{\gamma(L,\xi)-\xi^{-\frac{2}{\alpha}}\gamma(L+\frac{2}{\alpha},\xi)}{\Gamma(L)},
\end{equation}
where $\gamma(\cdot,\cdot)$ is the lower incomplete Gamma function, and the parameter $\xi$ is defined as
\begin{equation}\label{Eqn: xi}
    \xi\triangleq\frac{\varphi(T^{\cmm})}{R^{-\alpha}\bar{E}}=\frac{(\beta-2)\nu^{\beta-2}R^{\alpha}\varphi(T^{\cmm})}{\pi\beta\rho\bar{P}\lambda_{\text{pb}}T}.
\end{equation}
}
\end{lemma}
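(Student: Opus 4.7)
The plan is to translate the outage event into an explicit inequality between the independent random variables $\|\bh_k^{(i)}\|^2$ and $r_k^{(i)}$, and then compute the resulting double integral in closed form.

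First, I would observe that since $E_k^{\cmm}(r_k^{(i)},\bh_k^{(i)}) = \frac{(r_k^{(i)})^{\alpha}}{\|\bh_k^{(i)}\|^2}\varphi(T^{\cmm})$ when the maximum transmission duration $T^{\cmm}$ is used (as argued just before the lemma), the computation-outage event of Definition~\ref{definition: outage event} is equivalent to
\begin{equation*}
    \|\bh_k^{(i)}\|^2 \;\leq\; \frac{(r_k^{(i)})^{\alpha}\,\varphi(T^{\cmm})}{\bar{E}} \;=\; \xi\left(\frac{r_k^{(i)}}{R}\right)^{\!\alpha},
\end{equation*}
using the definition of $\xi$ in \eqref{Eqn: xi} and the expression for $\bar{E}$ in \eqref{Eqn: power-beacon energy}. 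Since $\bh_k^{(i)}$ and $r_k^{(i)}$ are i.i.d.\ across devices and rounds, the outage probability has no $k$ or $i$ dependence, confirming the first claim of the lemma.

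Next, I would integrate out the channel gain first. Conditioned on $r_k^{(i)}=r$, the PDF \eqref{Eqn: channel pdf} gives the CDF of $\|\bh_k^{(i)}\|^2$ as $\gamma(L,\cdot)/\Gamma(L)$, so using \eqref{Eqn: location pdf},
\begin{equation*}
    \Pout \;=\; \int_0^R \frac{2r}{R^2}\,\frac{\gamma\!\left(L,\,\xi (r/R)^{\alpha}\right)}{\Gamma(L)}\,dr \;=\; \frac{1}{\Gamma(L)}\int_0^1 2u\,\gamma\!\left(L,\,\xi u^{\alpha}\right)du,
\end{equation*}
after the substitution $u = r/R$.

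The main technical step is to evaluate this last integral in closed form. I would integrate by parts with $U = \gamma(L,\xi u^{\alpha})$ and $dV = 2u\,du$, using the identity $\frac{d}{du}\gamma(L,\xi u^{\alpha}) = \alpha\xi^{L}u^{\alpha L-1}e^{-\xi u^{\alpha}}$, to obtain
\begin{equation*}
    \int_0^1 2u\,\gamma(L,\xi u^{\alpha})\,du \;=\; \gamma(L,\xi) \;-\; \alpha\xi^{L}\int_0^1 u^{\alpha L+1}\,e^{-\xi u^{\alpha}}\,du.
\end{equation*}
A change of variable $t = \xi u^{\alpha}$ in the remaining integral converts it into $\frac{1}{\alpha}\xi^{-L-2/\alpha}\int_0^{\xi} t^{L+2/\alpha-1}e^{-t}\,dt = \frac{1}{\alpha}\xi^{-L-2/\alpha}\gamma\!\left(L+\tfrac{2}{\alpha},\xi\right)$. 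Substituting back yields exactly \eqref{Eqn: outage probability}.

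I do not expect any serious obstacle here; the only place where care is required is the substitution $t = \xi u^{\alpha}$, where the exponents must be combined precisely to recover the factor $\xi^{-2/\alpha}$ in front of $\gamma(L+2/\alpha,\xi)$. The rest is bookkeeping with the PDFs and the standard fact that the squared norm of an $L$-dimensional isotropic complex Gaussian is chi-square (i.e., Gamma-distributed with shape $L$).
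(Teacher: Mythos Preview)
Your proposal is correct and follows essentially the same approach as the paper: both rewrite the outage event as $\|\bh_k^{(i)}\|^2 \le \xi(r_k^{(i)}/R)^{\alpha}$, integrate the channel-gain density first to obtain $\gamma\!\left(L,\xi(r/R)^{\alpha}\right)/\Gamma(L)$, and then integrate over $r$. The paper simply asserts the final identity $\frac{2}{R^2}\int_0^R \gamma(L,(r/R)^\alpha\xi)\,r\,dr=\gamma(L,\xi)-\xi^{-2/\alpha}\gamma(L+2/\alpha,\xi)$ without showing the computation, whereas you supply the missing mechanics via integration by parts and the substitution $t=\xi u^\alpha$; otherwise the arguments are identical.
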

\begin{proof}
    See Appendix \ref{proof: outage probability}.
\end{proof}

The parameter $\xi$ defined in \eqref{Eqn: xi} is a key parameter influencing  $P_{\text{out}}$. The asymptotic scalings of $P_{\text{out}}$ with respect to $\xi$ are characterized in the following corollary of Lemma \ref{lemma: outage probability}.

\begin{corollary}\label{corollary: asymptotic outage probability}
 \emph{The computation-outage probability $P_{\text{out}}(\xi)$ is a monotone  increasing function of the parameter $\xi$. Asymptotically, $P_{\text{out}}(\xi)$ scales with respect to $\xi$  as follows:
\begin{equation}
    \lim_{\xi\to0}\frac{P_{\text{out}}(\xi)}{\xi^L}=\frac{2}{(\alpha L+2)\Gamma(L+1)}\quad\text{and}\quad
    \lim_{\xi\to\infty}\frac{1-P_{\text{out}}(\xi)}{\xi^{-\frac{2}{\alpha}}}=\frac{\Gamma(L+\frac{2}{\alpha})}{\Gamma(L)}.
\end{equation}
}
\end{corollary}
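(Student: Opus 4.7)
The plan is to work directly from the closed form
\[
    P_{\text{out}}(\xi)=\frac{\gamma(L,\xi)-\xi^{-\frac{2}{\alpha}}\gamma\!\left(L+\tfrac{2}{\alpha},\xi\right)}{\Gamma(L)}
\]
established in Lemma \ref{lemma: outage probability}, and to verify the three claims — monotonicity, the $\xi\to 0$ scaling, and the $\xi\to\infty$ scaling — using only elementary properties of the lower incomplete Gamma function. I will rely on two standard facts: $\frac{d}{d\xi}\gamma(s,\xi)=\xi^{s-1}e^{-\xi}$, and the power-series representation $\gamma(s,\xi)=\sum_{n\geq 0}\frac{(-1)^n\xi^{s+n}}{n!(s+n)}$, together with the complementary identity $\gamma(s,\xi)=\Gamma(s)-\Gamma(s,\xi)$ and the asymptotic $\Gamma(s,\xi)\sim \xi^{s-1}e^{-\xi}$ as $\xi\to\infty$.

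For \emph{monotonicity}, I would differentiate term by term. Using the product rule on $\xi^{-2/\alpha}\gamma(L+2/\alpha,\xi)$ yields
\[
    \frac{d}{d\xi}\!\left[\xi^{-\frac{2}{\alpha}}\gamma\!\left(L+\tfrac{2}{\alpha},\xi\right)\right] = -\frac{2}{\alpha}\xi^{-\frac{2}{\alpha}-1}\gamma\!\left(L+\tfrac{2}{\alpha},\xi\right)+\xi^{L-1}e^{-\xi},
\]
and the first term $\gamma(L,\xi)$ differentiates to $\xi^{L-1}e^{-\xi}$, so the $\xi^{L-1}e^{-\xi}$ contributions cancel and
\[
    \frac{dP_{\text{out}}}{d\xi}=\frac{2}{\alpha\,\Gamma(L)}\,\xi^{-\frac{2}{\alpha}-1}\gamma\!\left(L+\tfrac{2}{\alpha},\xi\right)>0,
\]
which gives strict monotonicity on $(0,\infty)$.

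For the $\xi\to 0^+$ scaling, I would substitute the leading term of the series, $\gamma(s,\xi)=\frac{\xi^s}{s}+O(\xi^{s+1})$, into both incomplete Gamma functions. The subtle point is that both $\gamma(L,\xi)$ and $\xi^{-2/\alpha}\gamma(L+2/\alpha,\xi)$ contribute terms of the same order $\xi^L$, and the leading behaviour only emerges after subtracting them. The combination produces
\[
    P_{\text{out}}(\xi)=\frac{\xi^L}{\Gamma(L)}\!\left[\frac{1}{L}-\frac{1}{L+2/\alpha}\right]+o(\xi^L)=\frac{2\,\xi^L}{(\alpha L+2)\,\Gamma(L+1)}+o(\xi^L),
\]
which gives the first limit. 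The main technical obstacle is ensuring that the remainder from the power series is genuinely $o(\xi^L)$ after the cancellation — this requires only retaining enough terms (next orders are $\xi^{L+1}$ and $\xi^{L+1+2/\alpha-2/\alpha}=\xi^{L+1}$, so the cancellation is safe).

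For the $\xi\to\infty$ scaling, I would use $\gamma(s,\xi)=\Gamma(s)-\Gamma(s,\xi)$ to write
\[
    1-P_{\text{out}}(\xi)=\frac{\Gamma(s)-\gamma(L,\xi)+\xi^{-\frac{2}{\alpha}}\gamma\!\left(L+\tfrac{2}{\alpha},\xi\right)}{\Gamma(L)}\bigg|_{s=L}=\frac{\Gamma(L,\xi)+\xi^{-\frac{2}{\alpha}}\!\left[\Gamma\!\left(L+\tfrac{2}{\alpha}\right)-\Gamma\!\left(L+\tfrac{2}{\alpha},\xi\right)\right]}{\Gamma(L)}.
\]
Since the upper incomplete Gamma terms $\Gamma(L,\xi)$ and $\xi^{-2/\alpha}\Gamma(L+2/\alpha,\xi)$ both decay exponentially in $\xi$, they are negligible next to the algebraically decaying $\xi^{-2/\alpha}\Gamma(L+2/\alpha)/\Gamma(L)$. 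Dividing by $\xi^{-2/\alpha}$ and taking the limit then gives the claimed constant $\Gamma(L+2/\alpha)/\Gamma(L)$. I expect the main bookkeeping challenge to be the $\xi\to 0$ case, where the leading-order cancellation must be handled carefully; the $\xi\to\infty$ case is essentially immediate once the complementary form is written down.
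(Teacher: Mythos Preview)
Your proposal is correct. The paper states this corollary without proof, treating it as an immediate consequence of the closed form in Lemma~\ref{lemma: outage probability}, so there is no paper proof to compare against; your argument supplies precisely the details one would expect: the derivative computation for monotonicity (with the clean cancellation of the $\xi^{L-1}e^{-\xi}$ terms), the leading-order power-series expansion near zero (where the $\xi^L$ terms from both incomplete Gamma functions must be combined before the limit emerges), and the complementary decomposition $\gamma=\Gamma-\Gamma(\cdot,\xi)$ at infinity to isolate the algebraically decaying piece.
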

Based on the definition of $\xi$ in \eqref{Eqn: xi}, the above scaling laws suggest that the computation-outage probability can be reduced by 1) enhancing  the  harvested energy $\bar{E}$ via increasing the density and transmission power of power-beacons or 2) decreasing the cell size.

\begin{remark}
(Active and Idle Rounds). \emph{A round is idle with learning paused when all devices are in outage (i.e., $M=0$). The probability of idling round is $\Pr(M=0)=P_{\text{out}}^K$ and that  of active round is $ \Pr(M>0)=1-P_{\text{out}}^K$. }
\end{remark}

\subsection{Effects of System Variables on Convergence}
Given the result in Proposition \ref{proposition: convergence rate}, characterizing the effects requires only the analysis of the relation between the global gradient deviation and the system parameters.  To this end, consider an arbitrary round and the superscripts $i$ of variables, which specify the round index, are omitted in the remainder of the sub-section to simplify notation. An arbitrary active device, say device $k$, randomly draws a mini-batch of $b_k$ samples with the index set $\mathcal{B}_k=\{j_1,\cdots,j_{b_k}\}$. Then the  local gradient estimate can be written as  $\mathbf{g}_k=\frac{1}{b_k}\sum_{j\in\mathcal{B}_k}\nabla\ell_j(\mathbf{w})$. Its   distribution is specified  in   the following lemma.
\begin{lemma}\label{lemma: distribution of local gradient estimate}
(Distribution of Local Gradient Estimate  \emph{\cite{wu2020noisySGD}}). \emph{At active device $k$, the first two moments of the local gradient estimate  are given as $\mE[\bg_k]=\nabla F_k(\bw)$ and $\var[\bg_k]=\frac{\Omega_k}{b_k}$ with $\Omega_k$ being a constant defined as 
$\Omega_k\triangleq\frac{1}{D}\sum_{j\in\cD_k}\nabla\ell_j(\bw)\nabla\ell_j(\bw)^{\sT}-\nabla F_k(\bw)\nabla F_k(\bw)^{\sT}$.
}
\end{lemma}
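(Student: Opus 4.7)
The plan is to view the mini-batch $\mathcal{B}_k$ as $b_k$ independent samples drawn uniformly (with replacement) from the empirical distribution on $\mathcal{D}_k$, and then apply the standard moment calculations for sums of i.i.d. random vectors. This is the natural model consistent with Assumption \ref{assumption: local gradients estimation} and with the absence of a finite-population correction factor in the stated variance $\Omega_k/b_k$.

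First I would compute the mean. For any single random index $j$ chosen uniformly from $\mathcal{D}_k$, linearity gives $\mathbb{E}\l[\nabla\ell_j(\bw)\r] = \frac{1}{D}\sum_{j\in\mathcal{D}_k}\nabla\ell_j(\bw) = \nabla F_k(\bw)$ by the definition of $F_k$ in \eqref{Eqn: local loss function}. Averaging $b_k$ such draws and invoking linearity of expectation once more yields $\mE[\bg_k] = \nabla F_k(\bw)$, which is also consistent with Assumption \ref{assumption: local gradients estimation}.

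Next I would compute the variance. For a single uniform draw $j$, the second moment is the empirical outer-product average, so
\begin{equation}
\var\l[\nabla\ell_j(\bw)\r] = \mE\l[\nabla\ell_j(\bw)\nabla\ell_j(\bw)^{\sT}\r] - \nabla F_k(\bw)\nabla F_k(\bw)^{\sT} = \Omega_k,
\end{equation}
which exactly matches the definition of $\Omega_k$ in the statement. Since the $b_k$ draws forming $\mathcal{B}_k$ are mutually independent, the covariance of $\bg_k = \frac{1}{b_k}\sum_{j\in\mathcal{B}_k}\nabla\ell_j(\bw)$ is the sum of the individual covariances rescaled by $1/b_k^2$, giving $\var[\bg_k] = b_k\Omega_k/b_k^2 = \Omega_k/b_k$.

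The main subtlety, rather than a genuine obstacle, is justifying the i.i.d. sampling model; the proof hinges on that choice, since a without-replacement variant would introduce a $(D-b_k)/(D-1)$ factor that is absent from the claim. Once the sampling convention is fixed, the rest is a one-line linearity-plus-independence calculation, so I would state the convention explicitly at the start of the proof and otherwise keep the argument to the two moment computations above.
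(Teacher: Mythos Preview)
Your argument is correct and is the standard derivation of the first two moments of a mini-batch estimator under i.i.d.\ sampling with replacement from the empirical distribution on $\mathcal{D}_k$. The paper itself does not supply a proof of this lemma; it is simply quoted from the cited reference, so there is no ``paper's own proof'' to compare against. Your observation about the sampling convention is also apt: the exact form $\Omega_k/b_k$ (without a finite-population correction) pins down the with-replacement model, and stating that convention up front is the right way to frame the proof.
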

It follows that the local gradient deviation at device $k$ can be written as $G_{\lo,k}=\frac{\sigma_k^2}{b_k}$  with its single-sample variance   $\sigma_k^2\triangleq \tr(\Omega_k)$.  Based on Assumptions \ref{assumption: local gradients estimation} and \ref{assumption: bounded loss function} and Lemma~\ref{lemma: distribution of local gradient estimate}, we obtain the following useful result. 
\begin{lemma}\label{proposition: global gradient deviation}
 \emph{The global gradient deviation in an arbitrary round can be bounded as 
\begin{equation}\label{Eqn: expected global GDve}
\begin{aligned}
    G_{\gl}\leq\frac{2}{K^2}\mE\l[\sum\nolimits_{k\in\cM}\frac{\sigma_k^2}{b_k}\r]+2\l\{\l(1-P_{\text{out}}^K\r)\l(\mE\l[\frac{1}{M}\Big|M>0\r]-\frac{1}{K}\r)+P_{\text{out}}^2\r\}\Phi,
\end{aligned}
\end{equation}
where $P_{\text{out}}$ is given in \eqref{Eqn: outage probability}. 
}
\end{lemma}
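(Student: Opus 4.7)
The plan is to decompose the aggregation error $\bg-\nabla F(\bw)$ into a conditional-mean-zero stochastic component (from mini-batch sampling at active devices) and a residual bias component (from the random active subset $\cM$), then bound each separately while handling the cases $M>0$ and $M=0$ individually. The leading factor of $2$ in the claim will come from applying the elementary inequality $\|\bx+\by\|^2\leq 2\|\bx\|^2+2\|\by\|^2$ to this decomposition.

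For the idling case $M=0$, \eqref{eqn: gradients aggregation} gives $\bg=\mathbf{0}$, so $\|\bg-\nabla F(\bw)\|^2=\|\nabla F(\bw)\|^2$. By Jensen's inequality applied to $\nabla F=\tfrac{1}{K}\sum_k\nabla F_k$, together with $\|\nabla F_k\|^2=\|\mE[\bg_k]\|^2\leq\mE\|\bg_k\|^2\leq\Phi$ (which follows from Assumptions~\ref{assumption: local gradients estimation} and~\ref{assumption: bounded gradient norm}), this is at most $\Phi$. Since the outage events are i.i.d.\ across the $K$ devices, $\Pr(M=0)=P_{\text{out}}^K$, and the crude bound $P_{\text{out}}^K\leq P_{\text{out}}^2$ (valid for $K\geq 2$) produces the $2P_{\text{out}}^2\Phi$ summand once the factor of $2$ is included.

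For $M>0$, I would work with the decomposition
\[
\bg-\nabla F(\bw)=\underbrace{\frac{1}{K}\sum_{k\in\cM}\bigl(\bg_k-\nabla F_k(\bw)\bigr)}_{\triangleq A}+B,
\]
with $B\triangleq\bg-\nabla F(\bw)-A=(\tfrac{1}{M}-\tfrac{1}{K})\sum_{k\in\cM}\bg_k-\tfrac{1}{K}\sum_{k\notin\cM}\nabla F_k(\bw)$, chosen so that $A$ has mean zero given $\cM$ and carries exactly the $1/K$ normalization appearing in the claim. After applying $\|A+B\|^2\leq 2\|A\|^2+2\|B\|^2$, the first term is handled by cross-device independence of $\{\bg_k\}$ (Assumption~\ref{assumption: local gradients estimation}) together with the per-device variance $\sigma_k^2/b_k$ from Lemma~\ref{lemma: distribution of local gradient estimate}, yielding $\mE[\|A\|^2\mid\cM]=\tfrac{1}{K^2}\sum_{k\in\cM}\sigma_k^2/b_k$, which is exactly the first term of the claim after the outer expectation over $\cM$. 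For $B$, I would exploit device exchangeability: the i.i.d.\ WPT/channel statistics make $\cM$ (conditional on $M$) uniformly distributed over size-$M$ subsets of $\cK$, and a finite-population sampling-variance identity together with $\mE\|\bg_k\|^2\leq\Phi$ delivers $\mE[\|B\|^2\mid M]\leq(\tfrac{1}{M}-\tfrac{1}{K})\Phi$. Taking the outer expectation on $\{M>0\}$ (which carries mass $1-P_{\text{out}}^K$) then produces the $(1-P_{\text{out}}^K)(\mE[1/M\mid M>0]-1/K)\Phi$ contribution.

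The main obstacle is the control of $B$: one must absorb simultaneously the additional conditional variance of order $(\tfrac{1}{M}-\tfrac{1}{K})^2\sum_{k\in\cM}\sigma_k^2/b_k$ (arising because $\sum_{k\in\cM}\bg_k$ is itself stochastic) and the sampling bias from the incomplete subset $\cM$, and show both are dominated by the single scalar $(\tfrac{1}{M}-\tfrac{1}{K})\Phi$. Establishing this dominance cleanly relies on combining the exchangeability of $\cM$ given $M$ with the population-gradient-variance bound $\sigma_{\mathrm{pop}}^2\triangleq\tfrac{1}{K}\sum_k\|\nabla F_k-\nabla F\|^2\leq\Phi$, which follows from Jensen's inequality and Assumption~\ref{assumption: bounded gradient norm}.
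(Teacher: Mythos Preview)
Your high-level architecture (two-term split plus $\|x+y\|^2\le 2\|x\|^2+2\|y\|^2$) matches the paper, but the splitting point is different, and your claimed control of $B$ does not deliver the constant in the lemma as stated.

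The paper's auxiliary is $\tbg=\tfrac{1}{K}\sum_{k\in\cM}\bg_k$, so the two pieces are $(\tbg-\nabla F)$ and $(\bg-\tbg)$. The second piece is a pure rescaling $(\tfrac{1}{M}-\tfrac{1}{K})\sum_{k\in\cM}\bg_k$, bounded crudely via Jensen and Assumption~\ref{assumption: bounded gradient norm} by $(\tfrac{1}{M}-\tfrac{1}{K})^2M^2\Phi$. The first piece is \emph{not} treated by exchangeability at all: the paper expands $\|\tfrac{1}{K}\sum_k(\mI_{k\in\cM}\bg_k-\nabla F_k)\|^2$ into all $K^2$ inner products and does a case analysis on membership (both active, one active, neither active). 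The off-diagonal ``neither active'' case produces exactly the $P_{\text{out}}^2\Phi$ term; the diagonal ``inactive'' terms and the rescaling term partially cancel after taking expectations, and this cancellation is what yields the constant~$2$.

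Your split instead loads $B$ with \emph{both} the rescaling error and the missing-device bias, and the assertion $\mE[\|B\|^2\mid M]\le(\tfrac{1}{M}-\tfrac{1}{K})\Phi$ is too strong. Conditioning on $\cM$ gives
\[
\mE\bigl[\|B\|^2\,\big|\,\cM\bigr]=\Bigl\|\tfrac{1}{M}\!\sum_{k\in\cM}\nabla F_k-\nabla F\Bigr\|^2+\Bigl(\tfrac{1}{M}-\tfrac{1}{K}\Bigr)^2\!\sum_{k\in\cM}\tfrac{\sigma_k^2}{b_k}.
\]
Averaging the first term over $\cM$ given $M$ via the sampling-without-replacement identity yields $\tfrac{K-M}{M(K-1)}\sigma_{\mathrm{pop}}^2=\tfrac{K}{K-1}(\tfrac{1}{M}-\tfrac{1}{K})\sigma_{\mathrm{pop}}^2$, which already saturates your target when $\sigma_{\mathrm{pop}}^2$ is close to $\Phi$. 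The second term contributes an additional $(\tfrac{1}{M}-\tfrac{1}{K})(1-\tfrac{M}{K})\Phi$ after using $\sigma_k^2/b_k\le\Phi$. Together these give roughly $2(\tfrac{1}{M}-\tfrac{1}{K})\Phi$, so after the outer factor of~$2$ your route produces a constant~$\approx 4$ in front of the braced $\Phi$-term, not~$2$. The exchangeability/population-variance machinery you invoke cannot close this gap because the two contributions to $\|B\|^2$ are of the same order and do not cancel. To recover the stated constant you would need to move the missing-device bias back into the first piece (i.e., adopt the paper's $\tbg$) and carry out the pairwise case analysis that enables the cancellation.
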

\begin{proof}
See Appendix \ref{proof: global gradient deviation}.
\end{proof}
Combining Proposition \ref{proposition: convergence rate} and Lemma \ref{proposition: global gradient deviation} gives the main result of this sub-section. 

\begin{proposition}\label{proposition: convergence} 
\emph{Given the learning rate satisfying $0<\eta\leq\frac{1}{\mu}$, the expected convergence rate of the FEEL algorithm satisfies 
\begin{equation}\label{Eqn: convergence}
\begin{aligned}
    \mE\l[\frac{1}{N}\sum_{i=0}^{N-1}\|\nabla F(\bw^{(i)})\|^2\r]&\leq\frac{2\l[F(\bw^{(0)})-F_*\r]}{\eta N}+ \frac{2}{K^2}\mE\l[\sum\nolimits_{k\in\cM}\frac{\sigma_k^2}{b_k}\r]\\
    & \quad +2\l\{\l(1-P_{\text{out}}^K\r)\l(\mE\l[\frac{1}{M}\Big|M>0\r]-\frac{1}{K}\r)+P_{\text{out}}^2\r\}\Phi.
\end{aligned}
\end{equation}
}
\end{proposition}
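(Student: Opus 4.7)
The plan is to establish the result by directly composing the two preparatory results developed earlier in this section: Proposition~\ref{proposition: convergence rate}, which translates the convergence-rate question into a bound on the round-averaged global gradient deviation, and Lemma~\ref{proposition: global gradient deviation}, which in turn bounds this deviation in terms of the mini-batch sizes, the random active set, and the computation-outage probability. No new analytical ingredients are required; the argument is essentially bookkeeping.

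First I would invoke Proposition~\ref{proposition: convergence rate} under the stated learning-rate restriction $0<\eta\leq 1/\mu$ to upper bound $\mE[\frac{1}{N}\sum_{i=0}^{N-1}\|\nabla F(\bw^{(i)})\|^2]$ by the sum of the optimization term $\frac{2[F(\bw^{(0)})-F_*]}{\eta N}$ and the round-averaged deviation $\frac{1}{N}\sum_{i=0}^{N-1} G_{\gl}^{(i)}$. Next I would apply Lemma~\ref{proposition: global gradient deviation} individually to each round-$i$ quantity $G_{\gl}^{(i)}$. The key observation enabling this step is that, because the harvested energies, device locations, and the uplink fading channels are i.i.d.\ across rounds, the outage probability $P_{\text{out}}$ from Lemma~\ref{lemma: outage probability} and the conditional distribution of the active-device count $M^{(i)}$ are independent of the round index $i$. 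Consequently the outage-related contribution $2\{(1-P_{\text{out}}^K)(\mE[1/M\mid M>0]-1/K)+P_{\text{out}}^2\}\Phi$ is a constant in $i$ and passes unchanged through the $\frac{1}{N}\sum_{i=0}^{N-1}$ operator.

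Finally I would substitute the per-round bound into the bound obtained in the first step. The round-invariant outage contribution appears as written in the statement, while the per-round variance contribution $\frac{2}{K^2}\mE[\sum_{k\in\cM^{(i)}}(\sigma_k^{(i)})^2/b_k^{(i)}]$ collapses under round-averaging into $\frac{2}{K^2}\mE[\sum_{k\in\cM}\sigma_k^2/b_k]$, following the notational convention adopted throughout Section~III where the superscript $(i)$ is suppressed and the outer expectation is understood to average over rounds (including over the random $\sigma_k^{(i)}$, $b_k^{(i)}$, and $\cM^{(i)}$).

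I do not expect a genuine technical obstacle, since the substantive work was done in the two prior results: Proposition~\ref{proposition: convergence rate} relies on a standard smoothness-based descent lemma together with Assumption~\ref{assumption: local gradients estimation}, while Lemma~\ref{proposition: global gradient deviation} performs the conditional variance decomposition over the random active set $\cM$ and invokes Assumption~\ref{assumption: bounded gradient norm} to control the residual misalignment between local and global gradients. The only point requiring mild care is the notational identification in the variance term and the explicit verification that $P_{\text{out}}$ and the law of $M^{(i)}$ are genuinely round-invariant under the i.i.d.\ assumptions of Section~\ref{system model}; these are precisely what permit the outage term to be treated as an additive constant rather than something that must be averaged non-trivially over $i$.
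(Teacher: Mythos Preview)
Your proposal is correct and matches the paper's approach exactly: the paper simply states that combining Proposition~\ref{proposition: convergence rate} and Lemma~\ref{proposition: global gradient deviation} gives the result, and then notes (as you do) that the bound is identical across rounds. Your additional remarks about verifying round-invariance of $P_{\text{out}}$ and the law of $M^{(i)}$ are appropriate and consistent with the paper's reasoning.
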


The above result relates convergence to several key system parameters, including the  mini-batch sizes, number of active devices and its distribution, and computation-outage probability.  In addition, one can observe that the upper bound in Proposition \ref{proposition: convergence} is identical for all rounds and thus simplify the subsequent analysis. 
\section{Optimal Learning-WPT Tradeoff  for Beacon-WPT}
In this section, we first give the optimized local computation policy, which determines how many samples the active devices can process in this round. Then, the global gradient deviation is derived by exploring the two factors that affect the global gradient deviation, respectively. The learning-energy tradeoff for beacon-WPT is characterized by the relation between convergence rate and spatial-energy density from the power-beacon network.

\subsection{Local Computation Optimization}\label{section: local computation optimization}
The local-computation variables at each device can be optimized to maximize the accuracy of local gradient estimation. On other hand, increasing the local batch size for model training reduces the local gradient deviation, but it increases local energy consumption according to \eqref{eqn: computation energy}. Moreover, under the time constraint, processing more samples requires boosting the computing speed, which also contributes to the energy growth. Thus it is useful  to control the two variables, sampled batch size and computing speed, under the criterion of minimum local gradient deviation. Considering device $k$ without loss of generality, since the local gradient deviation, $G_{\lo,k}$, given in \eqref{Eqn: local gradient deviation definition} is inversely proportional to the sampled batch size, $b_k$, the optimization problem can be formulated as 
\begin{equation*}\label{problem: local computation}{\bf (P1)}\quad
    \begin{aligned}
        \max_{\{b_k,f_k\}}\quad &b_k\\
        \text{s.t. }\quad&0<b_kC_kWf_k^2\leq E_k^{\cmp},\\
        &0<\frac{b_kW}{f_k}\leq {T^{\cmp}}.
    \end{aligned}
\end{equation*}
For tractability,  the batch size $b_k$ is relaxed to be continuous. This is reasonable as the batch size for model training is usually large  (e.g., thousands of images). 
With the relaxation, the optimal policy is derived in closed-form as shown below. 
\begin{lemma}\label{lemma: optimal computation policy}
(Optimal Local-Computation Policy). \emph{For optimal local gradient estimation  at each active device, the optimal sampled batch size and the computing speed should be set as follows: 
\begin{equation}
    b_k^\star=\frac{1}{W}\l(\frac{E_k^{\cmp}(T^{\cmp})^2}{C_k}\r)^{\frac{1}{3}}\quad\text{and}\quad f_k^\star=\l(\frac{E_k^{\cmp}}{C_k{T^{\cmp}}}\r)^{\frac{1}{3}},\quad \forall k\in\cM.
\end{equation}
}
\end{lemma}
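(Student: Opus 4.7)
The plan is to reduce (P1) to a one-dimensional maximization by treating $b_k$ and $f_k$ sequentially. For any fixed $f_k > 0$, the two inequality constraints become upper bounds on $b_k$: the energy constraint yields $b_k \leq E_k^{\cmp}/(C_k W f_k^2)$ and the latency constraint yields $b_k \leq T^{\cmp} f_k / W$. Hence the largest admissible batch size, as a function of $f_k$, is
\begin{equation*}
b_k(f_k) = \min\left\{\frac{E_k^{\cmp}}{C_k W f_k^2},\; \frac{T^{\cmp} f_k}{W}\right\}.
\end{equation*}
The first argument is strictly decreasing in $f_k$ and the second is strictly increasing, so $b_k(f_k)$ is a unimodal function whose maximum occurs at the unique $f_k$ where the two curves cross.

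Next I would argue that both constraints must be active at the optimum. If only the latency bound is tight, then $b_k = T^{\cmp} f_k / W$ and the energy consumed is $C_k T^{\cmp} f_k^3 < E_k^{\cmp}$; a small increase in $f_k$ (jointly with $b_k$) remains feasible and strictly improves the objective. Symmetrically, if only the energy bound is tight, then $b_k = E_k^{\cmp}/(C_k W f_k^2)$ and the local computation time is $E_k^{\cmp}/(C_k f_k^3)$; decreasing $f_k$ strictly increases $b_k$ while preserving feasibility. Therefore at optimality both inequalities hold with equality.

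Equating the two binding expressions, $E_k^{\cmp}/(C_k W f_k^2) = T^{\cmp} f_k / W$, and solving for $f_k$ gives $f_k^\star = (E_k^{\cmp}/(C_k T^{\cmp}))^{1/3}$; substituting back into either equation yields $b_k^\star = W^{-1}(E_k^{\cmp}(T^{\cmp})^2 / C_k)^{1/3}$, which matches the stated closed form. I would conclude by verifying that $b_k^\star, f_k^\star > 0$ whenever $E_k^{\cmp}, T^{\cmp} > 0$, so the strict positivity constraints in (P1) are satisfied. There is no serious technical obstacle: the only mild issue is the continuous relaxation of the integer-valued batch size, which the paper has already justified on the grounds that practical mini-batch sizes are large. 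The entire argument is a textbook balance between a decreasing and an increasing bound, with the optimum characterized by their intersection.
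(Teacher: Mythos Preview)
Your argument is correct. The paper does not actually supply a proof; it only remarks that the result follows from a straightforward application of the KKT conditions and omits the details. Your route bypasses Lagrangians entirely: you reduce (P1) to a one-dimensional problem in $f_k$ and observe that the largest feasible $b_k$ is the pointwise minimum of a strictly decreasing function of $f_k$ (the energy bound) and a strictly increasing one (the latency bound), hence maximized at their unique crossing. This is more elementary and arguably more transparent for a two-variable problem with monotone constraints; the KKT approach would arrive at the same conclusion (both inequality constraints active, with positive multipliers) but via a Lagrangian system. One minor remark: your second paragraph, arguing directly that both constraints must bind, is logically redundant with the first, since the min-of-two-curves argument already pins the optimum at the intersection; it does no harm, though, and serves as a useful sanity check.
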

The proof involves straightforward application of the \emph{Karush–Kuhn–Tucker} (KKT) conditions and is  omitted for brevity. 

Given the optimal policy, the expectation of the resultant local gradient estimate at an active device is derived as follows. 
First, the expectation  can be expressed in terms of  the computation energy budget $\{E_k^{\cmp}\}$ as
\begin{equation}\label{Eqn: gradient deviation vs E_k^cmp}
    \mE\l[G_{\lo,k}^{\star}\big|k\in\cM\r]=\frac{W}{(T^{\cmp})^{\frac{2}{3}}}\mE_{E_k^{\cmp}}\l[\frac{\sigma_k^2C_k^{\frac{1}{3}}}{(E_k^{\cmp})^{\frac{1}{3}}}\Bigg|k\in\cM\r].
\end{equation}
Since the  transferred energy is equal to the sum of computation and transmission energy, we can write $\{E_k^{\cmp}\}$ as the function of propagation distance and channel state  as $E_k^{\cmp}(r_k, \bh_k)=\bar{E}-\frac{r_k^{\alpha}}{\|\bh_k\|^2}\varphi(T^{\cmm})$. It follows from \eqref{Eqn: gradient deviation vs E_k^cmp} that 
\begin{equation}\label{Eqn: express expected local gradient}
    \Pr(k\in\cM)\mE\l[G_{\lo,k}^{\star}|k\in\cM\r]
    =\frac{W}{(T^{\cmp})^{\frac{2}{3}}}\iint_{\Theta}\frac{\sigma_k^2C_k^{\frac{1}{3}}}{\l(\bar{E}-E_k^{\cmm}(r,h)\r)^{\frac{1}{3}}}\frac{h^{L-1}e^{-h}}{\Gamma(L)}\frac{2r}{R^2}drdh,
\end{equation}
where the integral domain is defined as $\Theta=\{(r,h)\in\mR^2:\bar{E}>E_k^{\cmm}(r,h);~h\geq0;~0\leq r\leq R\}$. Based on \eqref{Eqn: express expected local gradient}, we obtain the following result.

\begin{lemma}\label{lemma: optimal local-gradient deviation}
(Optimal Local Gradient Deviation). \emph{Given optimal local computation in Lemma \ref{lemma: optimal computation policy}, the expectation of the  local gradient deviation at each active device can be bounded as
\begin{equation}\label{Eqn: upper bound for aggregated local gradient deviations}
\begin{aligned}
    \mE\l[G_{\lo, k}^{\star}|k\in\cM\r]&\leq \frac{2W\sigma_k^2C_k^{\frac{1}{3}}B(\frac{2}{3},\frac{2}{\alpha})}{\alpha (T^{\cmp})^{\frac{2}{3}}\bar{E}^{\frac{1}{3}}},
\end{aligned}
\end{equation}
where $B(\cdot,\cdot)$ is the Beta function, and $\bar{E}$ is the harvested energy at one device given in \eqref{Eqn: power-beacon energy}.
}
\end{lemma}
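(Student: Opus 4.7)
The plan is to rewrite the integral in \eqref{Eqn: express expected local gradient} via a change of variables, recognise $\Pr(k\in\cM)=1-P_{\text{out}}$ as the same integral with an altered integrand, and then apply Chebyshev's sum inequality to bound the resulting ratio by a constant that is independent of $\xi$ (equivalently, of $\bar{E}$).

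First, I would substitute the dimensionless fraction $u\triangleq r^{\alpha}\varphi(T^{\cmm})/(h\bar{E})$, i.e., the portion of the harvested energy consumed by transmission, so that $E_k^{\cmp}(r,h)=\bar{E}(1-u)$ and the active condition $\bar{E}>E_k^{\cmm}$ reads simply $u<1$. The Jacobian is $dr=\tfrac{1}{\alpha}(h\bar{E}/\varphi(T^{\cmm}))^{1/\alpha}u^{1/\alpha-1}du$ and the cell-boundary constraint $r\leq R$ translates to $h\leq\xi/u$, with $\xi$ as in \eqref{Eqn: xi}. Integrating $h$ first on $[0,\xi/u]$ (Fubini) produces a lower incomplete Gamma factor, so that \eqref{Eqn: express expected local gradient} becomes
\begin{equation*}
\Pr(k\in\cM)\,\mE\l[G_{\lo,k}^{\star}\big|k\in\cM\r]=C(\bar{E})\int_0^1 (1-u)^{-1/3}u^{2/\alpha-1}\gamma\l(L+\tfrac{2}{\alpha},\tfrac{\xi}{u}\r)du
\end{equation*}
for some explicit prefactor $C(\bar{E})$. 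The same substitution applied to $\Pr(k\in\cM)$ alone yields the identical $C(\bar{E})$ times $\int_0^1 u^{2/\alpha-1}\gamma(L+2/\alpha,\xi/u)\,du$, and an integration by parts in this second integral recovers the closed-form expression in Lemma~\ref{lemma: outage probability} as a consistency check. Dividing the two, the prefactor $C(\bar{E})$ cancels and I arrive at
\begin{equation*}
\mE\l[G_{\lo,k}^{\star}\big|k\in\cM\r]=\frac{W\sigma_k^2 C_k^{1/3}}{(T^{\cmp})^{2/3}\bar{E}^{1/3}}\cdot\frac{\int_0^1\psi(u)\rho(u)w(u)\,du}{\int_0^1\rho(u)w(u)\,du},
\end{equation*}
with $\psi(u)=(1-u)^{-1/3}$, $\rho(u)=\gamma(L+2/\alpha,\xi/u)$, and $w(u)=u^{2/\alpha-1}$.

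The key observation is that on $(0,1)$, $\psi$ is strictly increasing while $\rho$ is strictly decreasing (since $\xi/u$ decreases in $u$ and $\gamma(a,\cdot)$ is monotone increasing in its upper limit). Chebyshev's sum inequality (integral form), applied to oppositely monotone functions with positive weight $w$, then gives $\int w\cdot\int w\psi\rho\leq\int w\psi\cdot\int w\rho$, i.e.,
\begin{equation*}
\frac{\int_0^1\psi\rho w\,du}{\int_0^1\rho w\,du}\leq\frac{\int_0^1\psi w\,du}{\int_0^1 w\,du}=\frac{B(2/\alpha,2/3)}{\alpha/2}=\frac{2}{\alpha}B\l(\tfrac{2}{3},\tfrac{2}{\alpha}\r),
\end{equation*}
which, combined with the prefactor, is the bound asserted in the lemma. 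I expect the main obstacle to be recognising that Chebyshev is the right tool here: the naive approach of bounding $\gamma(L+2/\alpha,\xi/u)\leq\Gamma(L+2/\alpha)$ in the numerator alone leaves a leftover factor $\Gamma(L+2/\alpha)/[\xi^{2/\alpha}\Gamma(L)(1-P_{\text{out}})]$ that by Corollary~\ref{corollary: asymptotic outage probability} tends to $1$ only as $\xi\to\infty$ and diverges as $\xi\to 0$, so it fails to produce a global bound. The opposite-monotonicity argument is precisely what cancels the $\xi$-dependence between numerator and denominator simultaneously; the change of variables and Fubini swap themselves are routine but must be executed with care so that the boundary $r\leq R$ transmutes cleanly into the $\gamma(L+2/\alpha,\xi/u)$ factor.
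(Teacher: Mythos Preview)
Your proof is correct and takes a genuinely different route from the paper's.

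The paper substitutes $x=(r/R)^{\alpha}\xi$ (so that your $u=x/h$) and keeps both $x$ and $h$ as integration variables over the region $\{h>x,\,0\le x\le\xi\}$. It then splits this domain into the strips $h\le\xi$ and $h>\xi$. On the first strip the inner $x$-integral is evaluated exactly as $h^{2/\alpha-1/3}B(\tfrac{2}{3},\tfrac{2}{\alpha})$, producing $B(\tfrac{2}{3},\tfrac{2}{\alpha})\gamma(L+\tfrac{2}{\alpha},\xi)$. On the second strip the pointwise bound $(1-x/h)^{-1/3}\le(1-x/\xi)^{-1/3}$ (valid for $h\ge\xi$) yields $B(\tfrac{2}{3},\tfrac{2}{\alpha})\xi^{2/\alpha}\Gamma(L,\xi)$. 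Summing and recognising $\xi^{-2/\alpha}\gamma(L+\tfrac{2}{\alpha},\xi)+\Gamma(L,\xi)=\Gamma(L)(1-P_{\text{out}})$ from Lemma~\ref{lemma: outage probability}, the factor $(1-P_{\text{out}})$ cancels against $\Pr(k\in\cM)$.

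Your approach instead collapses the double integral to a one-dimensional weighted average by integrating out $h$ first, and then dispatches the ratio in a single step via Chebyshev's integral inequality. This is conceptually cleaner: the opposite monotonicity of $\psi(u)=(1-u)^{-1/3}$ and $\rho(u)=\gamma(L+\tfrac{2}{\alpha},\xi/u)$ is precisely the structural reason the $\xi$-dependence cancels between numerator and denominator, and Chebyshev exposes that directly. The paper's argument is more elementary---no named inequality beyond a pointwise monotonicity bound---but requires spotting the right split at $h=\xi$ and then the algebraic identity tying the two pieces back to $1-P_{\text{out}}$. Both routes arrive at the identical constant $\tfrac{2}{\alpha}B(\tfrac{2}{3},\tfrac{2}{\alpha})$.
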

\begin{proof}
    See Appendix \ref{proof: upper bound for aggregated local gradient deviations}.
\end{proof}


\subsection{Optimal Learning-WPT Tradeoff}\label{section: optimal learning-WPT tradeoff}
So far we have analyzed two factors, local gradient deviation and computation-outage probability, which affect the global gradient deviation. To derive the desired learning-energy tradeoff with optimal local computation, we need to quantify the last factor,  $\mE\l[\frac{1}{M}\Big|M>0\r]$, with $M$ being the number of active devices (see Proposition \ref{proposition: global gradient deviation}). This factor represents the fact that more active devices help improve the accuracy of global gradient estimation. To analyze the factor, whether each device  participates  in a round can be represented by a Bernoulli random variable with the parameter $(1-P_{\text{out}})$, namely $\mI_{k\in\cM}\sim \text{Ber}(1-P_{\text{out}})$, $\forall k\in\cK$, where $\mI_{k\in\cM}$ denotes the indicator whose value is $1$ if $k\in\cM$, or $0$ otherwise. It follows that $M\sim \text{Binom}(K,1-P_{\text{out}})$, which  follows the Binomial distribution. The truncated version of the probability mass function with $M > 0$ is given as 
\begin{equation}
    \Pr(M=m)=\frac{1}{1-P_{\text{out}}^K}\binom{K}{m}(1-P_{\text{out}})^mP_{\text{out}}^{K-m},~m=1,\cdots,K.
\end{equation}
Using the distribution and a result from \cite{Stephan1945statistics},  the desired factor can be  obtained as shown in the following lemma.

\begin{lemma}\label{lemma: expected reciprocal}
(Expected Reciprocal \emph{\cite{Stephan1945statistics}}). \emph{The expected reciprocal of the number of active devices,  $M$, can be written in terms of the computation-outage probability as follows: 
\begin{equation}\label{Eqn: expected reciprocal}
    \mE\l[\frac{1}{M}\Big|M>0\r]=\frac{1}{1-P_{\text{out}}^K}\sum_{m=1}^K\frac{P_{\text{out}}^{m-1}-P_{\text{out}}^K}{K-m+1}.
\end{equation}
}
\end{lemma}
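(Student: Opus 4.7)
The plan is to evaluate the expectation directly from the truncated binomial law displayed just above the lemma. Writing $q \triangleq P_{\text{out}}$ and $p \triangleq 1-q$ for brevity, the definition of conditional expectation gives
$$
\mE\left[\frac{1}{M}\,\Big|\,M>0\right] \;=\; \frac{1}{1-q^K}\sum_{m=1}^{K}\frac{1}{m}\binom{K}{m}p^{m}q^{K-m}.
$$
The single obstacle here is the factor $1/m$ inside the sum, which blocks an immediate application of the binomial theorem. Everything else will be bookkeeping.

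The standard trick I would use is the integral representation $\frac{1}{m}=\int_0^1 t^{m-1}\,dt$. After inserting this and swapping the (finite) sum with the integral, the inner sum collapses by the binomial theorem:
$$
\sum_{m=1}^{K}\binom{K}{m}(pt)^{m}q^{K-m} \;=\; (pt+q)^{K}-q^{K}.
$$
I would then perform the change of variables $u = pt+q$; since $p+q = 1$, the bounds become $u\in[q,1]$, and the integral reduces to $\int_q^1 \tfrac{u^{K}-q^{K}}{u-q}\,du$. Applying the geometric-series factorization $\tfrac{u^{K}-q^{K}}{u-q} = \sum_{j=0}^{K-1} u^{j}q^{K-1-j}$ and integrating term by term yields $\sum_{j=0}^{K-1}\tfrac{1-q^{j+1}}{j+1}q^{K-1-j}$.

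The last step is a reindexing. Setting $m = K-j$ (so $K-m+1 = j+1$ and $K-1-j = m-1$) converts the expression into $\sum_{m=1}^{K}\tfrac{q^{m-1}-q^{K}}{K-m+1}$, which matches the statement after multiplying by $(1-q^K)^{-1}$. The main thing to be careful about is precisely this final change of summation index so that the exponents on $P_{\text{out}}$ and the denominator $K-m+1$ align with the form given in~\eqref{Eqn: expected reciprocal}; every other step is mechanical. An equivalent (and even shorter) route is to recognize the identity $\tfrac{1}{m}\binom{K}{m}=\tfrac{1}{K}\binom{K-1}{m-1}\cdot \tfrac{K}{m}$ and iterate, but I would favor the integral approach above because it avoids nested identities and makes the reindexing transparent.
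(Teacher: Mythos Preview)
Your proof is correct. Every step checks out: the integral representation of $1/m$, the binomial collapse to $(pt+q)^K-q^K$, the substitution $u=pt+q$ (where the factors of $p$ indeed cancel), the geometric-series expansion, and the final reindexing $m=K-j$ all go through exactly as you describe.

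As for comparison: the paper does not actually prove this lemma. It is stated with attribution to Stephan~\cite{Stephan1945statistics} and invoked as a known result, with no derivation given. Your argument is therefore strictly more than what the paper supplies; it is a clean, self-contained proof of the cited identity. The integral trick you use is the standard way such negative-moment formulas for the binomial are derived (and is essentially how Stephan's result is obtained), so while there is no ``paper proof'' to diverge from, your route is the natural one and would serve well as a replacement for the bare citation.
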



The global gradient deviation $G_{\gl}$ can be obtained from Lemma \ref{proposition: global gradient deviation} by substituting the results in Lemma \ref{lemma: optimal local-gradient deviation} and \ref{lemma: expected reciprocal}, where we emphasize that $\Pr(k\in\cM)=1-P_{\text{out}}$ and
\begin{equation}
    \mE\l[\sum_{k\in\cM}G_{\lo,k}^{\star}\r]=\sum_{k=1}^K\mE\l[\mI_{k\in\cM}G_{\lo,k}^{\star}\r]=\sum_{k=1}^K\Pr(k\in\cM)\mE\l[G_{\lo,k}^{\star}|k\in\cM\r].
\end{equation}
Then substituting $G_{\gl}$ into Proposition \ref{proposition: convergence rate} completes the proof.

Using the  preceding results, the optimal learning-energy tradeoff for the case of beacon-WPT can be readily derived as shown in the following theorem. 

\begin{theorem}\label{theorem: expected learning performance}
(Convergence with Beacon-WPT). \emph{Consider the case of beacon-WPT. Given the optimal local computation in Section \ref{section: local computation optimization},  the convergence rate of WP-FEEL is bounded by 
\begin{equation}\label{Eqn: convergence rate with PB-WPT}
    \mE\l[\frac{1}{N}\sum_{i=0}^{N-1}\|\nabla F(\bw^{(i)})\|^2\r]\leq\frac{2\l[F(\bw^{(0)})-F_*\r]}{\eta N}+\frac{\delta W\overline{C_{\sigma}}(1-P_{\text{out}})}{\rho^{\frac{1}{3}}K(T^{\cmp})^{\frac{2}{3}}\lambda_{\text{energy}}^{\frac{1}{3}}} + R_{\text{es}},
\end{equation}
where $\lambda_{\text{energy}}$ is the spatial-energy density, the parameter $\overline{C_{\sigma}}\triangleq \frac{1}{K}\sum_{k=1}^K\sigma_k^2C_k^{\frac{1}{3}}$, the constant  
$\delta\triangleq\frac{4}{\alpha}B(\frac{2}{3},\frac{2}{\alpha})\l(\frac{(\beta-2)\nu^{\beta-2}}{\pi\beta}\r)^{\frac{1}{3}}$,
the computation-outage probability $P_{\text{out}}$ specified  in Lemma \ref{lemma: outage probability}, and the residue term $R_{\text{es}}$ given as 
\begin{equation}\label{Eqn: residue}
R_{\text{es}} = 2\l(\sum_{m=2}^K\frac{P_{\text{out}}^{m-1}-P_{\text{out}}^K}{K-m+1}+P_{\text{out}}^2\r)\Phi. 
\end{equation}
}
\end{theorem}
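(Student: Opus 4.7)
The plan is to assemble Theorem~\ref{theorem: expected learning performance} by substitution: every ingredient has been set up in Proposition~\ref{proposition: convergence} and the subsequent lemmas, so the task is to combine them and re-express everything in terms of the spatial-energy density $\lambda_{\text{energy}}=\bar{P}\lambda_{\text{pb}}T$. I would start from the bound in Proposition~\ref{proposition: convergence} and treat its two non-initial terms independently — first the mini-batch/local-gradient-variance term, then the outage residue.

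For the mini-batch term $\frac{2}{K^2}\mE\bigl[\sum_{k\in\cM}\sigma_k^2/b_k^\star\bigr]$, the key observation is that $\sigma_k^2/b_k=G_{\lo,k}$, so after optimal local computation (Lemma~\ref{lemma: optimal computation policy}) it suffices to apply
$$
\mE\l[\sum_{k\in\cM}G_{\lo,k}^{\star}\r]=\sum_{k=1}^K\Pr(k\in\cM)\,\mE\l[G_{\lo,k}^{\star}\,\big|\,k\in\cM\r],
$$
with $\Pr(k\in\cM)=1-P_{\text{out}}$ common to all $k$ under beacon-WPT. Substituting the bound from Lemma~\ref{lemma: optimal local-gradient deviation} and using $\sum_k\sigma_k^2C_k^{1/3}=K\overline{C_\sigma}$ collapses the inner summation and cancels one factor of $K$ against the $1/K^2$ prefactor, producing a term proportional to $W\overline{C_\sigma}(1-P_{\text{out}})/[K(T^{\cmp})^{2/3}\bar{E}^{1/3}]$.

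Next, I would inject the beacon-WPT harvested energy $\bar{E}=\pi\beta\rho\lambda_{\text{energy}}/[(\beta-2)\nu^{\beta-2}]$ from \eqref{Eqn: power-beacon energy} into the $\bar{E}^{-1/3}$ factor. This extracts $\rho^{-1/3}\lambda_{\text{energy}}^{-1/3}$ and absorbs the remaining deterministic constants $4B(2/3,2/\alpha)/\alpha$ and $[(\beta-2)\nu^{\beta-2}/(\pi\beta)]^{1/3}$ into $\delta$ exactly as defined in the theorem. This produces the second term of \eqref{Eqn: convergence rate with PB-WPT}.

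Finally, for the residue I would use Lemma~\ref{lemma: expected reciprocal} to expand $\mE[1/M\mid M>0]$. The decisive step is that the $m=1$ summand in \eqref{Eqn: expected reciprocal} equals $(1-P_{\text{out}}^K)/[K(1-P_{\text{out}}^K)]=1/K$, so when we form $(1-P_{\text{out}}^K)\bigl(\mE[1/M\mid M>0]-1/K\bigr)$ the $m=1$ contribution cancels the $-1/K$ shift exactly, leaving $\sum_{m=2}^K(P_{\text{out}}^{m-1}-P_{\text{out}}^K)/(K-m+1)$. Adding the $P_{\text{out}}^2$ term from Proposition~\ref{proposition: convergence} and multiplying by $2\Phi$ reproduces $R_{\text{es}}$. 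There is no real obstacle — the theorem is a consolidation step — and the only care required is in the constant bookkeeping for $\delta$ and in verifying the $m=1$ cancellation that makes the residue start from $m=2$.
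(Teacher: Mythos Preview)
Your proposal is correct and follows essentially the same route as the paper: substitute Lemma~\ref{lemma: optimal local-gradient deviation} and Lemma~\ref{lemma: expected reciprocal} into the bound of Proposition~\ref{proposition: convergence} (equivalently, Lemma~\ref{proposition: global gradient deviation} into Proposition~\ref{proposition: convergence rate}), use $\Pr(k\in\cM)=1-P_{\text{out}}$ together with $\mE\bigl[\sum_{k\in\cM}G_{\lo,k}^{\star}\bigr]=\sum_{k}\Pr(k\in\cM)\mE[G_{\lo,k}^{\star}\mid k\in\cM]$, and then rewrite $\bar{E}$ via \eqref{Eqn: power-beacon energy}. Your explicit identification of the $m=1$ cancellation and the constant bookkeeping for $\delta$ are exactly the details the paper leaves implicit.
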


The three terms on the right-hand side of \eqref{Eqn: convergence rate with PB-WPT} are explained as follows. The first term represents gradient descent using the ground-truth gradients. The second term, which arises from the global gradient deviation, reflects the effects of beacon-WPT and other system parameters on the convergence rate. The effects of individual parameters are discussed as follows. 

\begin{itemize}
    \item (WPT effect). Recall that the spatial-energy density, $\lambda_{\text{energy}} = \bar{P}\lambda_{\text{pb}}T$, refers to  the amount of energy transferred  by the power-beacon network to a unit area in a single round. Increasing the density leads to a linear growth of energy harvested by each device. As a result, active devices can estimate the local gradient with higher accuracies by  using  larger mini-batch sizes. This leads to the decrease of the global gradient deviation proportionally with $\lambda_{\text{energy}}^{-\frac{1}{3}}$. 
    
    \item (Local-computation properties). The computation parameters are grouped in $\frac{W\overline{C_{\sigma}}}{(T^{\cmp})^{\frac{2}{3}}}$. It can be interpreted as the reduction of global gradient deviation by processing a single sample at each device. The interpretation can be derived  mathematically by writing 
    \begin{equation}\label{Eqn: local computation properties}
        \frac{W\overline{C_{\sigma}}}{(T^{\cmp})^{\frac{2}{3}}}=\frac{1}{K}\sum_{k=1}^K\sigma_k^2\l[\frac{C_kW^3}{(T^{\cmp})^2}\r]^{\frac{1}{3}}\triangleq\frac{1}{K}\sum_{k=1}^K\sigma_k^2(e_k^{\cmp})^{\frac{1}{3}},
    \end{equation}
    where $e_k^{\cmp}$ represents the required energy for per-sample computation at device $k$. While the above quantity quantifies the per-sample gain of distributed gradient estimation, the number of samples each device is capable of processing depends on the spatial-energy density of beacon-WPT discussed earlier. 
    
    \item (Number of devices). Due to the averaging operation in \eqref{eqn: gradients aggregation}, increasing the number of devices, $K$, reduces the global gradient deviation following the scaling law of $O\l(\frac{1}{K}\r)$.
    
    \item (Probability of participation). The quantity $(1-P_{\text{out}})$ represents the probability of participating in learning by  each device. As adding an active device contributes its local gradient deviation to the global counterpart,  increasing $(1-P_{\text{out}})$ seems to enlarge the latter. On the contrary, aligned with intuition, the overall effect is to reduce the global gradient deviation if taking into account  the last term, $R_{\text{es}}$, which is also part of the deviation and decreases as $(1-P_{\text{out}})$ grows. 
\end{itemize}
The last term $R_{\text{es}}$ captures the loss of convergence rate caused by those devices in computation-outage. If $P_{\text{out}}$ is small, $R_{\text{es}}$ scales as $O(P_{\text{out}})$, confirming the effect mentioned above  that $R_{\text{es}}$ decreases as the probability of participation, $(1 - P_{\text{out}})$, grows. 

The mentioned effect of WPT on the convergence of WP-FEEL can be mathematically quantified in the following corollary of Theorem \ref{theorem: expected learning performance}. 

\begin{corollary}\label{corollary: scaling law}
(Effect of WPT). \emph{Consider the case of beacon-WPT.  As  the spatial-energy density $\lambda_{\text{energy}}$ grows, the convergence rate increases as follows: 
\begin{equation}
    \mE\l[\frac{1}{N}\sum_{i=0}^{N-1}\|\nabla F(\bw^{(i)})\|^2\r]\leq\frac{2\l[F(\bw^{(0)})-F_*\r]}{\eta N}+\Upsilon \lambda_{\text{energy}}^{-\frac{1}{3}}+O\l(\lambda_{\text{energy}}^{-L}\r), \quad \lambda_{\text{energy}} \rightarrow \infty,
\end{equation}
where the parameter $\Upsilon \triangleq \frac{\delta W\overline{C_{\sigma}}}{\rho^{\frac{1}{3}} K(T^{\cmp})^{\frac{2}{3}}}$, and the outage probability scales as $P_{\text{out}}=O(\lambda_{\text{energy}}^{-L})$.
}
\end{corollary}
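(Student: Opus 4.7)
The plan is to derive this scaling law directly from Theorem \ref{theorem: expected learning performance} by tracking the asymptotic behaviour of each $\lambda_{\text{energy}}$-dependent quantity as $\lambda_{\text{energy}} \to \infty$. The only such quantities in the bound \eqref{Eqn: convergence rate with PB-WPT} are the explicit $\lambda_{\text{energy}}^{-1/3}$ factor, the outage probability $P_{\text{out}}$ that enters through $(1-P_{\text{out}})$, and the residue $R_{\text{es}}$ defined in \eqref{Eqn: residue}.

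First I would translate the limit into one on the key parameter $\xi$. From \eqref{Eqn: xi} one sees $\xi = c/\lambda_{\text{energy}}$ with $c = \frac{(\beta-2)\nu^{\beta-2}R^{\alpha}\varphi(T^{\cmm})}{\pi\beta\rho}$, so $\lambda_{\text{energy}} \to \infty$ corresponds to $\xi \to 0$. Invoking the small-$\xi$ asymptotic from Corollary \ref{corollary: asymptotic outage probability} then yields $P_{\text{out}} = \frac{2\xi^L}{(\alpha L+2)\Gamma(L+1)} + o(\xi^L) = O(\lambda_{\text{energy}}^{-L})$, which establishes the second claim of the corollary.

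Next I would substitute this scaling into the two outage-dependent pieces in \eqref{Eqn: convergence rate with PB-WPT}. For the middle term, writing $(1-P_{\text{out}}) = 1 - O(\lambda_{\text{energy}}^{-L})$ gives
\begin{equation*}
\frac{\delta W\overline{C_{\sigma}}(1-P_{\text{out}})}{\rho^{1/3}K(T^{\cmp})^{2/3}\lambda_{\text{energy}}^{1/3}} = \Upsilon \lambda_{\text{energy}}^{-1/3} + O\bigl(\lambda_{\text{energy}}^{-1/3-L}\bigr).
\end{equation*}
For the residue, isolating the lowest power of $P_{\text{out}}$ in \eqref{Eqn: residue} shows that when $K \geq 2$ the $m=2$ summand dominates, yielding $R_{\text{es}} = \frac{2\Phi P_{\text{out}}}{K-1} + O(P_{\text{out}}^2) = O(\lambda_{\text{energy}}^{-L})$; the degenerate case $K=1$ is even smaller since then the sum is empty and $R_{\text{es}} = 2\Phi P_{\text{out}}^2 = O(\lambda_{\text{energy}}^{-2L})$.

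Finally, since $L \geq 1$ one has $\lambda_{\text{energy}}^{-1/3-L} = o(\lambda_{\text{energy}}^{-L})$, so both correction terms are absorbed into a single $O(\lambda_{\text{energy}}^{-L})$ and the claimed expansion follows. The main obstacle is really just careful bookkeeping of asymptotic orders across the three $\lambda_{\text{energy}}$-dependent quantities and verifying that the two distinct sources of $O(\lambda_{\text{energy}}^{-L})$-type corrections are the leading residuals; no technical ingredient beyond Corollary \ref{corollary: asymptotic outage probability} is required.
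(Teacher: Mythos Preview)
Your proposal is correct and follows exactly the route the paper (implicitly) takes: the corollary is stated without proof as an immediate consequence of Theorem~\ref{theorem: expected learning performance}, obtained by feeding the small-$\xi$ asymptotic $P_{\text{out}}=O(\xi^L)=O(\lambda_{\text{energy}}^{-L})$ from Corollary~\ref{corollary: asymptotic outage probability} into the bound \eqref{Eqn: convergence rate with PB-WPT} and using the observation (made in the discussion after Theorem~\ref{theorem: expected learning performance}) that $R_{\text{es}}=O(P_{\text{out}})$. Your bookkeeping of the three $\lambda_{\text{energy}}$-dependent pieces is accurate and slightly more explicit than the paper itself.
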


\section{Extension to Server-WPT}
The preceding sections focus on beacon-WPT. The results therein can be extended to the case of server-WPT by  accounting for fading in  WPT links.  Given channel-state information,  the power allocation at  the server for transfer to difference devices  can be optimized in the sequel. 

\subsection{Optimal Learning-WPT Tradeoff}

In the scenario  of server-WPT with  fading in the WPT links, the convergence analysis is more tedious than  the beacon-WPT counterpart.  Specifically, the current analysis differs from its beacon-WPT counterpart in two factors: 1) computation-outage probability accounting for fading in a WPT link, and 2) harvested energy that is now random.  For tractability, assume equal power allocation WPT, namely that each energy beam has  fixed power of $P_0$ (Section \ref{section: extended version} for power control). First, the computation-outage probability, denoted as $P_{\text{out}}'$, is  derived as follows. 
The definition of a computation-outage event can be modified from that in Definition \ref{definition: outage event} by replacing $\bar{E}$ with  $E_k(r_k^{(i)},\|\tbh_k^{(i)}\|^2)$. Specifically, a computation-outage event occurs if 
\begin{equation}\label{Eqn: outage event in server-WPT}
\frac{\|\tbh_k^{(i)}\|^2\|\bh_k^{(i)}\|^2}{\big(r_k^{(i)}\big)^{2\alpha}}\leq\frac{\varphi(T^{\cmm})}{\rho P_0T}.
\end{equation}
To simplify notation, we define the product random variable $X=\|\tbh_k^{(i)}\|^2\|\bh_k^{(i)}\|^2$. Since both variables in the product  follow the $\chi^2$-distribution with $2L$ degrees-of-freedom, $X$ has the following distribution:
\begin{equation}\label{Eqn: distribution of X}
    f_X(x)=\int_0^{\infty}f_{\|\tbh_k\|^2}(h)f_{\|\bh_k\|^2}(x/h)\frac{1}{h}dh=\frac{2x^{L-1}\mathsf{K}_0(2\sqrt{x})}{\Gamma(L)^2},
\end{equation}
where $\mathsf{K}_0(\cdot)$ is the modified Bessel function of the second kind. Combining \eqref{Eqn: outage event in server-WPT} and \eqref{Eqn: distribution of X} gives 
\begin{equation}
    P_{\text{out}}'=\int_0^{\tau}\frac{2x^{L-1}\mathsf{K}_0(2\sqrt{x})\l(1-(x/\tau)^{\frac{1}{\alpha}}\r)}{\Gamma(L)^2}dx,
\end{equation}
where the parameter $\tau$ is defined as
    $\tau\triangleq\frac{R^{2\alpha}\varphi(T^{\cmm})}{\rho P_0T^{\cmp}}$.
As the exact expression of  $P_{\text{out}}'$ has no closed form, we  derive an upper bound  for tractability.
\begin{corollary}\label{corollary: upper bound of computation-outage probability}
\emph{For large transferred power ($P_0\gg 1$), the computation-outage probability in the case of server-WPT can be bounded as: 
\begin{equation}\label{Eqn: upper-bound of the outage probability in server-WPT}
    P_{\text{out}}'\leq\frac{\tau^L}{\Gamma(L)^2L(1+\alpha L)}\ln{\frac{1}{\tau}} + O\l(P_0^{-L}\r).
\end{equation}
}
\end{corollary}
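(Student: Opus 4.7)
The plan is to exploit the small-argument expansion of the modified Bessel function $\mathsf{K}_0$ in the integral representation of $P_{\text{out}}'$. Since $\tau = R^{2\alpha}\varphi(T^{\cmm})/(\rho P_0 T^{\cmp}) \to 0$ as $P_0 \to \infty$, the integration range $[0,\tau]$ shrinks to the origin, where $\mathsf{K}_0(2\sqrt{x})$ admits a simple logarithmic asymptotic. From the Frobenius expansion $\mathsf{K}_0(z) = -\ln(z/2) - \gamma + O(z^2\ln z)$ as $z\downarrow 0$, one obtains
\begin{equation*}
\mathsf{K}_0(2\sqrt{x}) = -\tfrac{1}{2}\ln x - \gamma + O\bigl(x\,|\ln x|\bigr), \qquad x\downarrow 0,
\end{equation*}
where $\gamma$ denotes the Euler--Mascheroni constant.

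First I would substitute this expansion into $P_{\text{out}}'$ and pass to the normalized variable $u = x/\tau$, giving
\begin{equation*}
P_{\text{out}}' = \frac{2\tau^L}{\Gamma(L)^2}\int_0^1 u^{L-1}\bigl(1-u^{1/\alpha}\bigr)\mathsf{K}_0(2\sqrt{\tau u})\,du,
\end{equation*}
with $\mathsf{K}_0(2\sqrt{\tau u}) = \tfrac{1}{2}\ln(1/\tau) + \tfrac{1}{2}\ln(1/u) - \gamma + O\bigl(\tau u\,|\ln(\tau u)|\bigr)$. The only $\tau$-dependent quantity inside the bracket that diverges as $\tau \to 0$ is $\tfrac{1}{2}\ln(1/\tau)$; this factor pulls out of the $u$-integral and multiplies the elementary beta-type integral
\begin{equation*}
\int_0^1 u^{L-1}(1-u^{1/\alpha})\,du = \frac{1}{L} - \frac{1}{L+1/\alpha} = \frac{1}{L(\alpha L + 1)},
\end{equation*}
producing the announced leading term $\tau^L\ln(1/\tau)/[\Gamma(L)^2 L(\alpha L+1)]$.

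The remaining three pieces, arising from $\tfrac{1}{2}\ln(1/u)$, from $-\gamma$, and from the $O(\tau u|\ln(\tau u)|)$ Bessel remainder, are each bounded by a constant times $\tau^L$ in absolute value. Indeed, $\int_0^1 u^{L-1}(1-u^{1/\alpha})\ln(1/u)\,du$ converges because $L\geq 1$ suppresses the $u=0$ singularity; the $-\gamma$ piece is a constant multiple of the same beta integral; and the Bessel remainder contributes at worst $O(\tau^{L+1}|\ln\tau|) = o(\tau^L)$. Since $\tau \propto P_0^{-1}$, the aggregate error is $O(P_0^{-L})$, which establishes the claimed upper bound. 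The main technical point is just to retain the Bessel expansion to sufficient order so that the scale separation $\tau^L\ln(1/\tau) \gg \tau^L$ becomes visible; once the expansion is in hand, the proof reduces to a single elementary integral and a routine magnitude comparison.
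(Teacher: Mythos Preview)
Your argument is correct and in fact yields a two-sided asymptotic, which immediately implies the one-sided inequality stated in the corollary. The paper, however, proceeds by a genuinely different route: it does not expand $\mathsf{K}_0$ at all but instead writes the integrand as $\dfrac{2\mathsf{K}_0(2\sqrt{x})}{\ln x}\cdot\dfrac{x^{L-1}[1-(x/\tau)^{1/\alpha}]\ln x}{\Gamma(L)^2}$, observes that $2\mathsf{K}_0(2\sqrt{x})/\ln x$ is negative and monotone increasing near $0^+$ with limit $-1$, and applies the second mean value theorem for integrals to peel this factor off. What remains is the elementary integral $-\int_0^{\tau}x^{L-1}[1-(x/\tau)^{1/\alpha}]\ln x\,dx$, evaluated in closed form. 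Your Frobenius-expansion approach is arguably more elementary (no mean value theorem, no monotonicity check) and delivers more, namely the matching lower bound; the paper's approach is slicker in that it dispatches the Bessel function in one stroke and never has to track the separate $\ln(1/u)$, $-\gamma$, and higher-order remainder pieces that you enumerate.
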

\begin{proof}
See Appendix \ref{proof: upper bound of computation-outage probability}.
\end{proof}

Given the above result, the  proof of Theorem \ref{theorem: expected learning performance} can be straightforwardly extended to the current case by modifying the expressions of computation-outage probability and harvested energy, yielding the following main result of this section. 

\begin{theorem}\label{theorem: convergence with server-WPT}
(Convergence with Server-WPT). \emph{Consider the case of server-WPT. If the transmission power for WPT to each device is large ($P_0\gg1$), the convergence rate of WP-FEEL is bounded as
\begin{equation}\label{Eqn: convergence rate with server-WPT}
    \mE\l[\frac{1}{N}\sum_{i=0}^{N-1}\|\nabla F(\bw^{(i)})\|^2\r]\leq\frac{2[F(\bw^{(0)})-F_*]}{\eta N}+\frac{\widetilde{\delta} W\overline{C_{\sigma}}}{\rho^{\frac{1}{3}}K T^{\cmp}P_0^{\frac{1}{3}}}+O(P_0^{-L}\ln{P_0}),\quad P_0\to\infty,
\end{equation}
where the constant $\widetilde{\delta}$ is defined as $\widetilde{\delta}=\frac{2B(\frac{2}{3},\frac{1}{6}+\frac{1}{\alpha})\Gamma(L+\frac{1}{3})R^{\frac{\alpha}{3}}}{\alpha\Gamma(L)}$, and the upper-bound of the computation-outage probability $P'_{\text{out}}$ as shown in \eqref{Eqn: upper-bound of the outage probability in server-WPT} scales as $O(P_0^{-L}\ln P_0)$.
}  
\end{theorem}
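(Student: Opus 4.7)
The plan is to run the argument of Theorem \ref{theorem: expected learning performance} with the beacon-WPT ingredients replaced by their server-WPT analogues. Invoking Proposition \ref{proposition: convergence} as a template, the convergence rate is upper bounded by the initial-gap gradient-descent term, the expected sum of optimised local gradient deviations, and a residue $R_{\text{es}}$ that depends on the computation-outage probability $P'_{\text{out}}$. Since $P'_{\text{out}}$ has already been bounded by $O(P_0^{-L}\ln P_0)$ in Corollary \ref{corollary: upper bound of computation-outage probability}, the remaining substantive work is to re-derive the expected local gradient deviation under the randomly harvested energy.

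Because the optimisation in Lemma \ref{lemma: optimal computation policy} is pointwise in the realised computation-energy budget, it remains valid in the server-WPT setting and yields $G_{\lo,k}^{\star}=W\sigma_k^2 C_k^{1/3}/[(E_k^{\cmp})^{1/3}(T^{\cmp})^{2/3}]$ on the event $\{k\in\cM\}$, where now $E_k^{\cmp}=\rho r_k^{-\alpha}\|\tbh_k\|^2 P_0 T^{\cmp}-r_k^{\alpha}\varphi(T^{\cmm})/\|\bh_k\|^2$. The key quantity $\mE[(E_k^{\cmp})^{-1/3}\mathds{1}_{k\in\cM}]$ is therefore a triple integral over the distance $r_k$ and the two chi-squared channel gains $\|\tbh_k\|^2$ and $\|\bh_k\|^2$, with joint density obtained from \eqref{Eqn: location pdf} and \eqref{Eqn: channel pdf}.

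The technical core is the substitution $v=r_k^{2\alpha}\varphi(T^{\cmm})/(\rho\|\tbh_k\|^2\|\bh_k\|^2 P_0 T^{\cmp})$, which is precisely the ratio $E_k^{\cmm}/E_k$ and factorises $E_k^{\cmp}$ as $(\rho\|\tbh_k\|^2 P_0 T^{\cmp}\varphi(T^{\cmm})/\|\bh_k\|^2)^{1/2}\cdot(1-v)/v^{1/2}$, while collapsing the non-outage constraint to $v<1$. Following the same upper-bounding device used in Lemma \ref{lemma: optimal local-gradient deviation}, I extend the $v$-domain to $(0,1)$ so that the $v$-integral becomes a Beta integral producing the factor $B(2/3,1/6+1/\alpha)$, after which the two remaining channel-gain integrals decouple into standard gamma-type integrals and yield the $\Gamma(L+1/3)/\Gamma(L)$ factor. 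Collecting the powers of $\rho$, $P_0$ and $T^{\cmp}$ identifies the leading-order scaling as $P_0^{-1/3}$, and summing the per-device bounds with the $1/K^2$ factor of Proposition \ref{proposition: convergence} together with the definition of $\overline{C_\sigma}$ yields the middle term of \eqref{Eqn: convergence rate with server-WPT} with coefficient $\widetilde{\delta}W\overline{C_\sigma}/K$.

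Finally, I collect the error terms. By Corollary \ref{corollary: upper bound of computation-outage probability} and Lemma \ref{lemma: expected reciprocal}, both $\mE[1/M\mid M>0]-1/K$ (whose sum is dominated by the $m=K-1$ term) and the residue $R_{\text{es}}$ defined in \eqref{Eqn: residue} are $O(P'_{\text{out}})=O(P_0^{-L}\ln P_0)$, matching the remainder stated in the theorem. I expect the main obstacle to be carefully extracting the leading $P_0^{-1/3}$ term from the triple integral while relegating both the cell-boundary truncation $r_k\leq R$ (which replaces the natural upper limit $v_R$ of the $v$-integral by $1$) and the transmission-energy subtraction in $E_k^{\cmp}$ into the $O(P_0^{-L}\ln P_0)$ remainder; the $(1-v)^{-1/3}$ factor is only weakly integrable near $v\to 1$, so the tail contribution has to be estimated separately by splitting the integration domain into a non-outage bulk and a thin boundary layer controlled by the outage probability.
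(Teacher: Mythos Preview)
Your overall architecture is right and matches the paper's one-line justification: rerun Proposition~\ref{proposition: convergence}/Theorem~\ref{theorem: expected learning performance} with the server-WPT energy model, re-derive the analogue of Lemma~\ref{lemma: optimal local-gradient deviation}, and absorb the outage-dependent residue via Corollary~\ref{corollary: upper bound of computation-outage probability}. The use of Lemma~\ref{lemma: optimal computation policy} pointwise and Lemma~\ref{lemma: expected reciprocal} for $R_{\text{es}}$ is also correct (minor slip: the dominant term in that sum is $m=2$, not $m=K-1$, but your $O(P'_{\text{out}})$ conclusion stands).

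However, the step ``extend the $v$-domain to $(0,1)$'' is a genuine gap, and your closing paragraph reveals the underlying conceptual error: you treat the cell-boundary constraint $r_k\le R$ (equivalently $v\le v_R=\tau/(\|\tbh_k\|^2\|\bh_k\|^2)$) as a higher-order correction to be pushed into the $O(P_0^{-L}\ln P_0)$ remainder. It is not. After your substitution, the $P_0$-content of the integrand is $P_0^{-1/6}$ from $(E_k^{\cmp})^{-1/3}$ and $P_0^{1/\alpha}$ from the Jacobian $2r\,dr/R^2$; if you integrate $v$ over the fixed interval $(0,1)$ and then integrate $\tilde h,h$ freely, the resulting bound scales as $P_0^{1/\alpha-1/6}$, which \emph{diverges} for the physically relevant range $\alpha<6$. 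The missing $P_0^{-1/6-1/\alpha}$ factor comes precisely from the $v_R\propto P_0^{-1}$ upper limit that you discarded.

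The device in the proof of Lemma~\ref{lemma: optimal local-gradient deviation} is not a domain extension but the (c1)/(c2) split: when $v_R<1$ (the dominant regime as $P_0\to\infty$) one bounds $(1-v)^{-1/3}\le(1-v/v_R)^{-1/3}$ on $(0,v_R)$ and rescales to obtain $v_R^{1/6+1/\alpha}B(\tfrac{2}{3},\tfrac{1}{6}+\tfrac{1}{\alpha})$, which supplies both the Beta factor and the correct $P_0$-power; when $v_R\ge1$ the interval is already $(0,1)$. Equivalently, work with $x=(r/R)^{2\alpha}\tau$ so that the $x$-domain $[0,\tau]$ is fixed and the non-outage condition becomes $\|\tbh_k\|^2\|\bh_k\|^2>x$, which is the direct analogue of the $(x,h)$ structure in Appendix~\ref{proof: upper bound for aggregated local gradient deviations}. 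Once this is done, the remaining channel-gain integral produces a single $\Gamma$-ratio (only the WPT gain $\|\tbh_k\|^2$ carries a nontrivial exponent after cancellations), and the asserted $P_0^{-1/3}$ middle term follows.
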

Comparing Theorems \ref{theorem: expected learning performance} and \ref{theorem: convergence with server-WPT}, the convergence-rate bound for server-WPT has a similar form as that for beacon-WPT except for two differences. First, the spatial energy density in the latter is replaced with transmission energy per-round for WPT, $P_0T^{\cmp}$. Second, the scaling law of the last (residual) term in \eqref{Eqn: convergence rate with server-WPT} differs from that in \eqref{Eqn: convergence rate with PB-WPT} due to WPT-link fading. The effects of other parameters are identical to those discussed in Section \ref{section: optimal learning-WPT tradeoff}. 

\subsection{Optimizing Server-WPT}\label{section: extended version}
The fixed power allocation for WPT in the preceding sub-section can be relaxed to improve the convergence performance. While more sophisticated designs are possible (e.g., involving optimization of  the number of active devices), we consider the following  practical  two-step scheme to be applied in each round (with the index  $i$ omitted in the sequel to simplify notation). 

\begin{itemize}
\item {\bf Step 1} (Scheduling): Let $\tP_k$ denote the power allocated for WPT to device $k$. Considering equal power allocation ($\tP_k = P_0$), select the set of active devices, $\cM'$, by applying the computation-outage criterion in \eqref{Eqn: outage event in server-WPT}. 
\item {\bf Step 2} (Optimal Power Control): Given $\cM'$ and the number of active devices $M' =|\cM'|$, optimize the power allocation  under the sum power constraint $\sum_{k\in\cM'}\tP_k\leq M' P_0$ to minimize sum-local-gradient deviation. 
\end{itemize}

For Step 2, given $\cM'$ and \eqref{Eqn: gradient deviation vs E_k^cmp}, the sum-local-gradient deviation is given as 
\begin{equation}\label{Eqn: 52}
\sum_{k\in \cM'}G_{\lo,k}^{\star}=\frac{W}{(T^{\cmp})^{\frac{2}{3}}}\sum_{k\in \cM'}\frac{\sigma_k^2C_k^{\frac{1}{3}}}{(E_k^{\cmp})^{\frac{1}{3}}}.
\end{equation}
By substituting $\{E_k^{\cmp}\}$ into \eqref{Eqn: 52}, the problem of optimal power control can be formulated as 
\begin{equation*}\label{problem: power allocation}{\bf (P3)}\quad
    \begin{aligned}
        \min_{\{\tP_k\}}\quad &\frac{W}{{(T^{\cmp}})^{\frac{2}{3}}}\sum_{k\in\cM'}\frac{\sigma_k^2C_k^{\frac{1}{3}}}{\l(\frac{\rho \|\tbh_k\|^2\tP_kT^{\cmp}}{r_k^{\alpha}}-\frac{r_k^{\alpha}\varphi(T^{\cmm})}{\|\bh_k\|^2}\r)^{\frac{1}{3}}}\\
        \text{s.t.}\quad 
        &\sum_{k\in\cM'}\tP_k\leq M' P_0. 
    \end{aligned}
\end{equation*}
By a  straightforward application of the KKT conditions, the optimal power-allocation policy is derived in closed-form as: 
\begin{equation}\label{Eqn: power allocation policy}
    \tP_k^\star=\frac{r_k^{2\alpha}\varphi(T^{\cmm})}{\rho \|\tbh_k\|^2\|\bh_k\|^2T^{\cmp}}+\frac{r_k^{\frac{\alpha}{4}}\sigma_k^{\frac{3}{2}}C_k^{\frac{1}{4}}}{\|\tbh_k\|^{\frac{1}{2}}\theta}\l(P_0-\varsigma\r),~~\forall k\in\cM', 
\end{equation}
where 
\begin{equation}\label{Eqn: two parameters}
    \theta=\frac{1}{M'}\sum\limits_{k\in\cM}\frac{r_k^{\frac{\alpha}{4}}\sigma_k^{\frac{3}{2}}C_k^{\frac{1}{4}}}{\|\tbh_k\|^{\frac{1}{2}}}\quad\text{and}\quad \varsigma=\frac{1}{M'}\sum\limits_{k\in\cM}\frac{r_k^{2\alpha}\varphi(T^{\cmm})}{\rho \|\tbh_k\|^2\|\bh_k\|^2T^{\cmp}},
\end{equation}
and thus omitted for brevity. One can observe from \eqref{Eqn: power allocation policy} that the optimal power allocated for WPT sums two components. The first component, which supports gradient uploading, is inversely proportional to the gain of the close-loop channel cascading downlink for WPT and uplink for gradient uploading. The other component, which supports local computation, depends only on the WPT link and is a monotone decreasing function of the channel again of the WPT link. 

\section{Experimental Results}
\subsection{Experimental Settings}
The default  settings are as follows. The edge server is equipped with $64$ antennas and  the cell radius is set as $R=100$ m.  There are $K=30$ edge devices uniformly distributed in the cell, each of which is allocated an uplink bandwidth $B=1$ MHz.  The noise spectrum density is $N_0=-80$ dBm/Hz and the path loss exponents are $\alpha=3.8$ and $\beta=4$. For the WPT path-loss model, $\nu=1$. The processor coefficients $\{C_k\}$ are chosen by uniformly sampling  the set $\{0.010,0.011,\cdots,0.100\}$ [in $\text{Watt}\cdot(\text{MFLOPs}/\text{s})^{-3}$].  The learning  task aims at training a CNN model to classify  handwritten digits using  the well-known MNIST dataset.  For non-i.i.d. data distribution, we first arrange $6\times 10^4$ data samples  according to their labels, follow the sample sequence to divide  the dataset into 60 subsets each  of size 1000, and assign each of 30 devices 2 data subsets. The classifier model is implemented using a 6-layer CNN which consists of two $5\times5$ convolution layers with ReLU activation, each followed by $2\times2$ max pooling, a fully connected layer with $50$ units and ReLU activation, and a final softmax output layer. The total number of parameters is $q=21,840$ and the per-sample computation workload is $W = 1.09\times 10^6$ FLOPs. Furthermore, we suppose that each parameter of the training model gradient is quantized into $Q=16$ bits, and as a result, the transmission overhead in one round is $3.49\times10^5$ bits. We fix the number of rounds as $N=500$ and evaluate  the learning performance in terms of  average gradient norm (over rounds) and  test accuracy.
\begin{figure}[t!]
    \centering
    \subfigure[Average gradient norm]{
    \label{Fig: GDev vs energy density}
    \includegraphics[width=8.89cm]{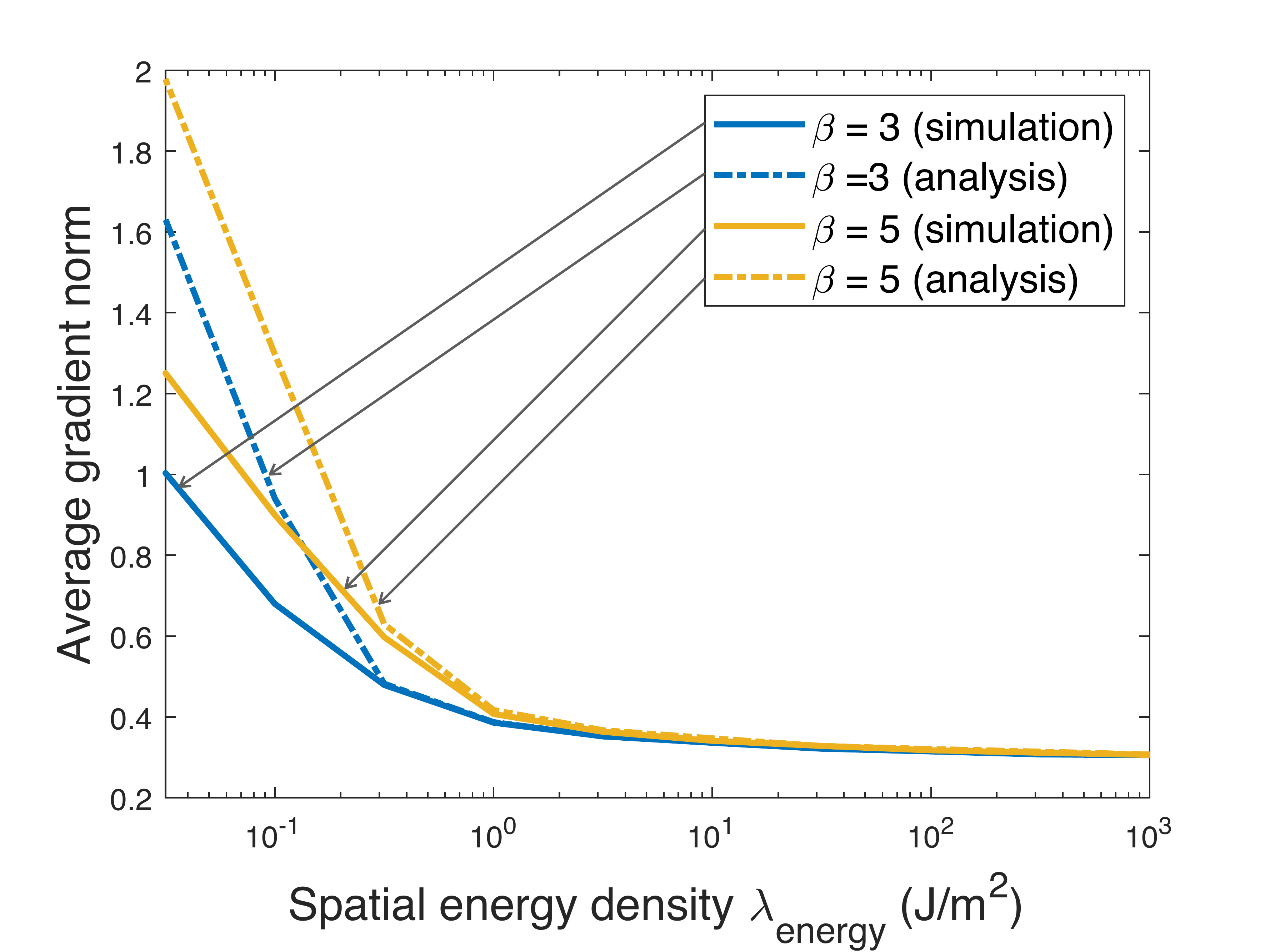}}
    \subfigure[Test accuracy]{
    \label{Fig: accu vs energy density}
    \includegraphics[width=8.89cm]{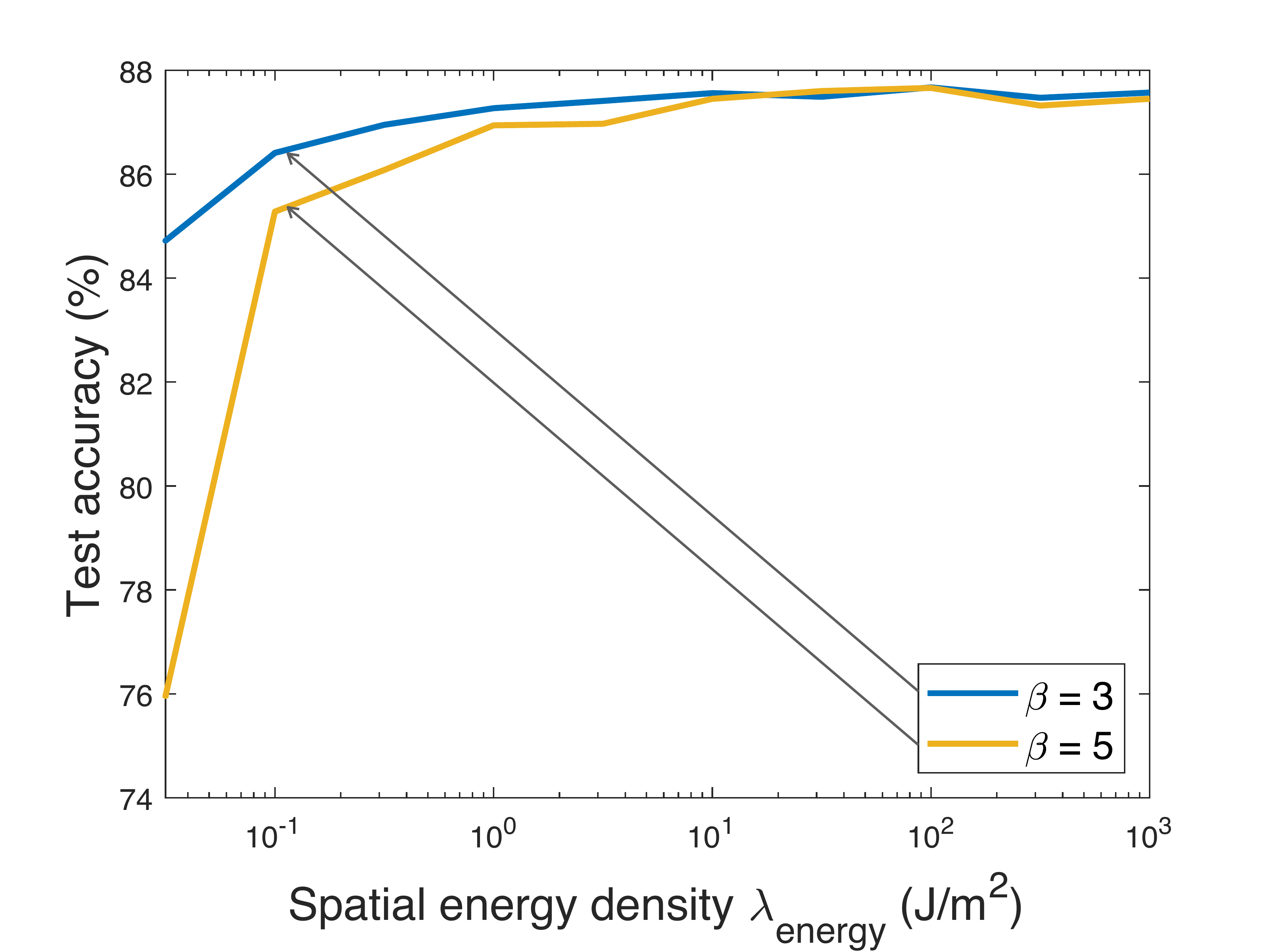}}\vspace{-1mm}
    \caption{Effects of the spatial-energy density on the  performance of a WP-FEEL system with beacon-WPT.}\vspace{-5mm}
    \label{Fig: learning performance vs energy density}
\end{figure}

\subsection{Performance  of WP-FEEL with Beacon-WPT}

The curves of learning performance versus the spatial-energy density $\lambda_{\text{energy}}$ provided by the power-beacon network are plotted in Fig.~\ref{Fig: learning performance vs energy density} for a varying path-loss exponent of WPT links, $\beta$. Both experimental and analytical results are presented. Several observations can be made. As $\lambda_{\text{energy}}$  grows, the increase of transferred energy allows more devices to participate in learning or equivalently  more distributed data to be  exploited for model training. Consequently, one can observe that both the average gradient norm and test accuracy saturate as they converge to their ground-truths. Next, before the convergence, the average gradient norms from analysis and experiments  follow the same scaling laws. That validates the analytical model and the results in Theorem \ref{theorem: expected learning performance} and its corollary. Last, one can observer that a larger value of $\beta$ and hence smaller path loss results in better performance as more energy can be transferred from beacons to devices.

\begin{figure}[t!]
    \centering
    \subfigure[Average gradient norm]{
    \label{Fig: GDev vs computation}
    \includegraphics[width=8.89cm]{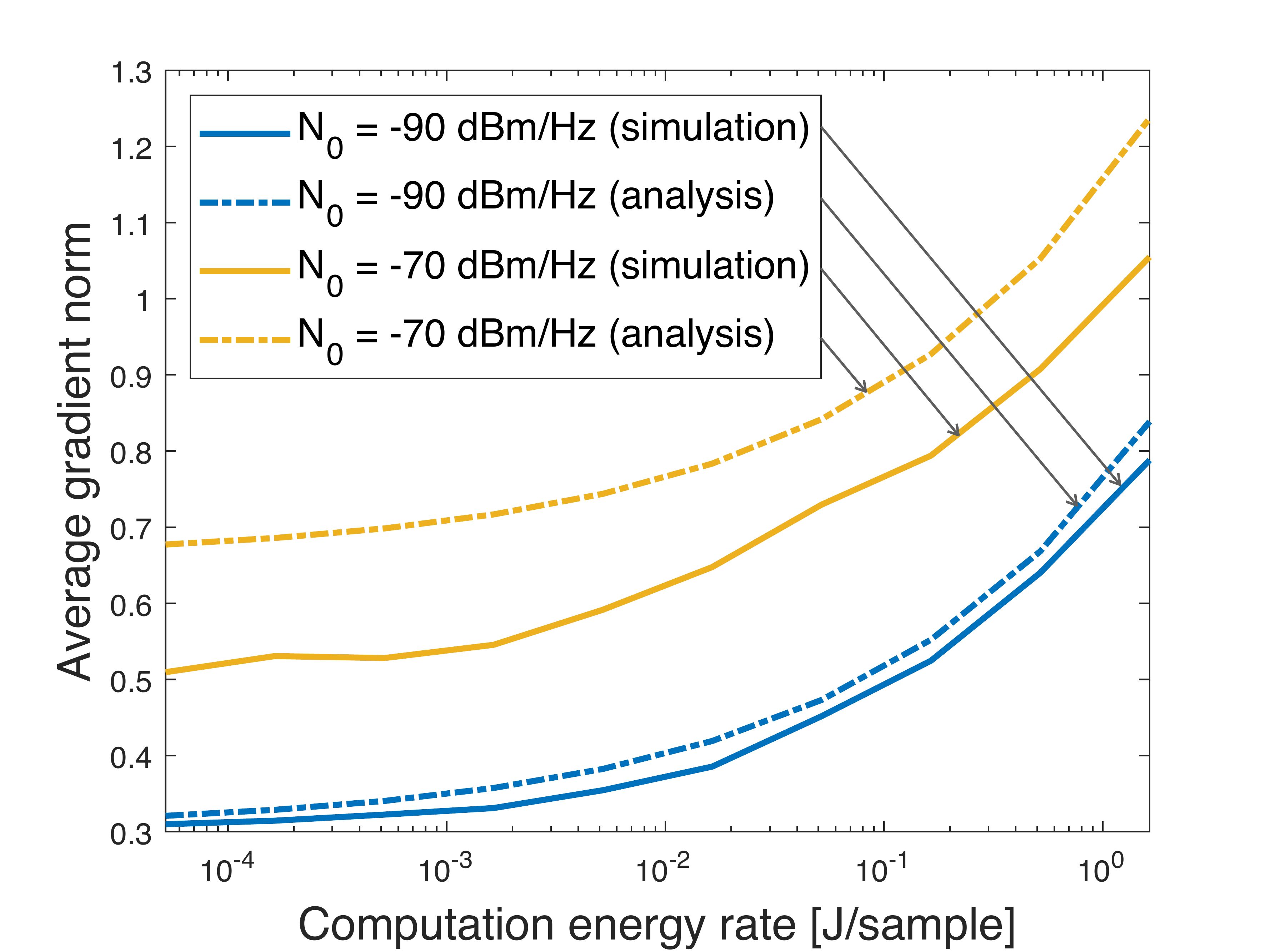}}
    \subfigure[Test accuracy]{
    \label{Fig: accu vs computation}
    \includegraphics[width=8.89cm]{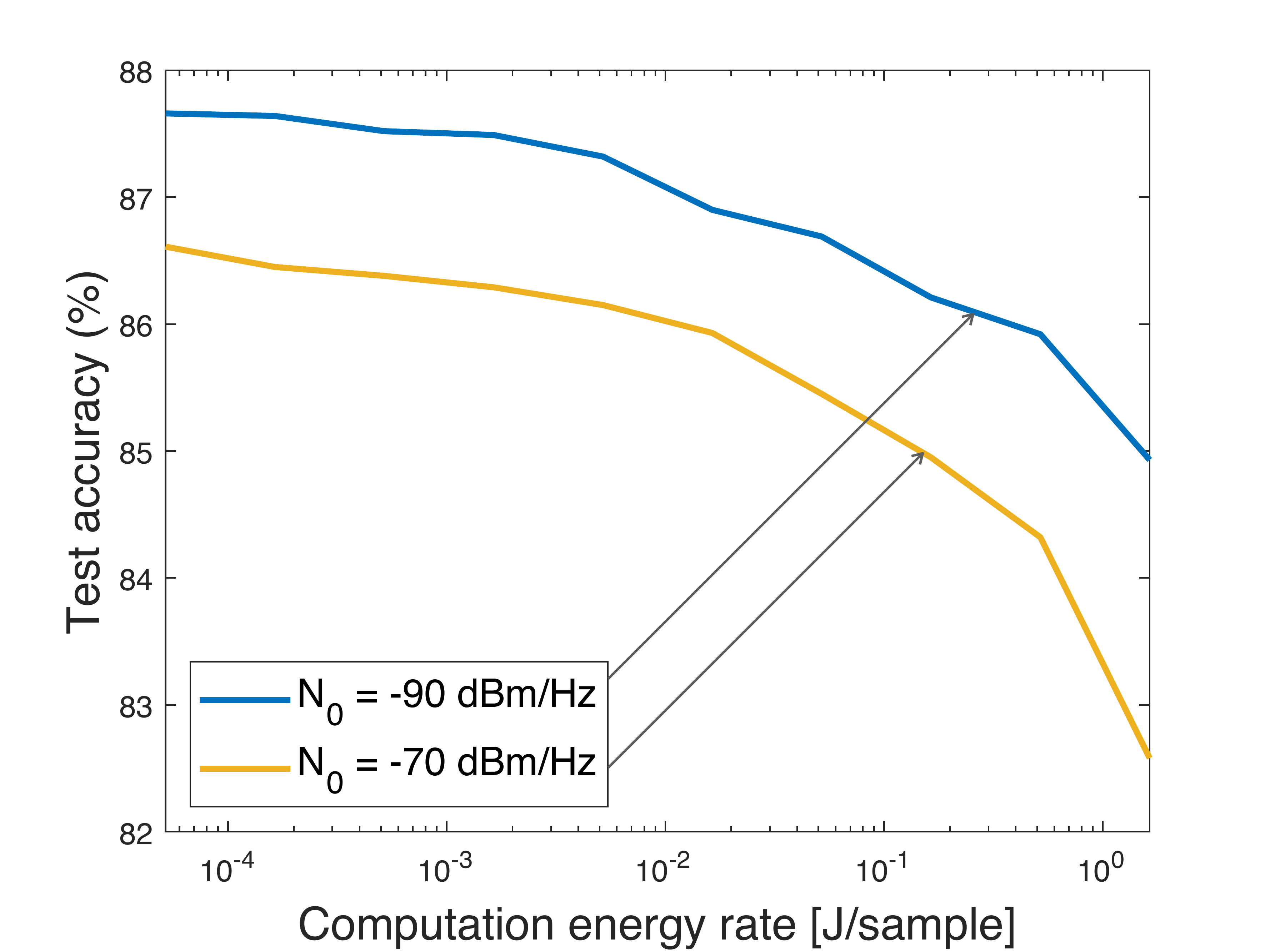}}\vspace{-1mm}
    \caption{Effects of computation energy consumption and channel noise on the performance of a WP-FEEL system with beacon-WPT.}\vspace{-5mm}
    \label{Fig: learning performance vs computation}
\end{figure}

Define the computation-energy rate of a device  as the number of FLOPs computed by its processor per unit energy consumption.  For ease of exposition, consider the case of uniform computation-energy rates for all devices.  The curves of learning performance versus computation-energy rates are plotted  in Fig. \ref{Fig: learning performance vs computation} for a varying noise power spectrum density $N_0$. Several observations can be made. Both the analytical and experiment results are presented in the figure. Their discrepancy arises from some mismatch between the general analytical model specified in the common Assumptions \ref{assumption: smoothness}--\ref{assumption: bounded gradient norm} and the specific dataset (i.e., MNIST) used in the experiments. The mismatch is observed to less for smaller $N_0$ due to the averaging effect of more devices/more data involved in learning. Next, aligned with intuition, reducing $N_0$ improves learning performance by activating more devices as well as reducing the communication-energy consumption; thereby more energy is allowed for local gradient estimation and its accuracy  improves. Next, given fixed harvested energy per device,  one can observe degradation of learning performance as the increasing computation-energy rate reduces the mini-batch size, the number of active devices, and the energy available for communication.

\begin{figure}[t!]
    \centering
    \subfigure[Cell radius  $R=20$ m.]{
    \label{Fig: R=20}
    \includegraphics[width=8.3cm]{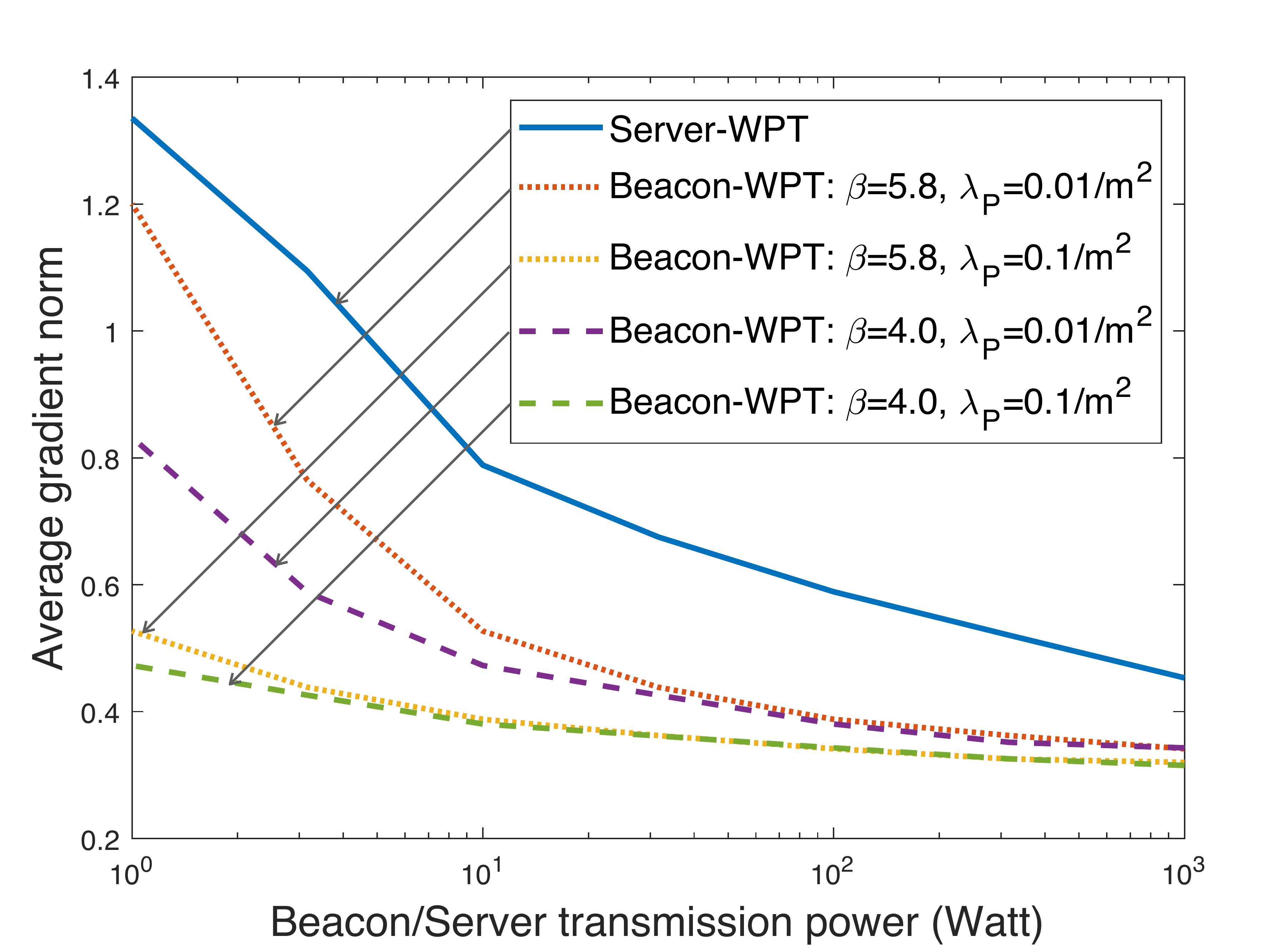}
    \includegraphics[width=8.3cm]{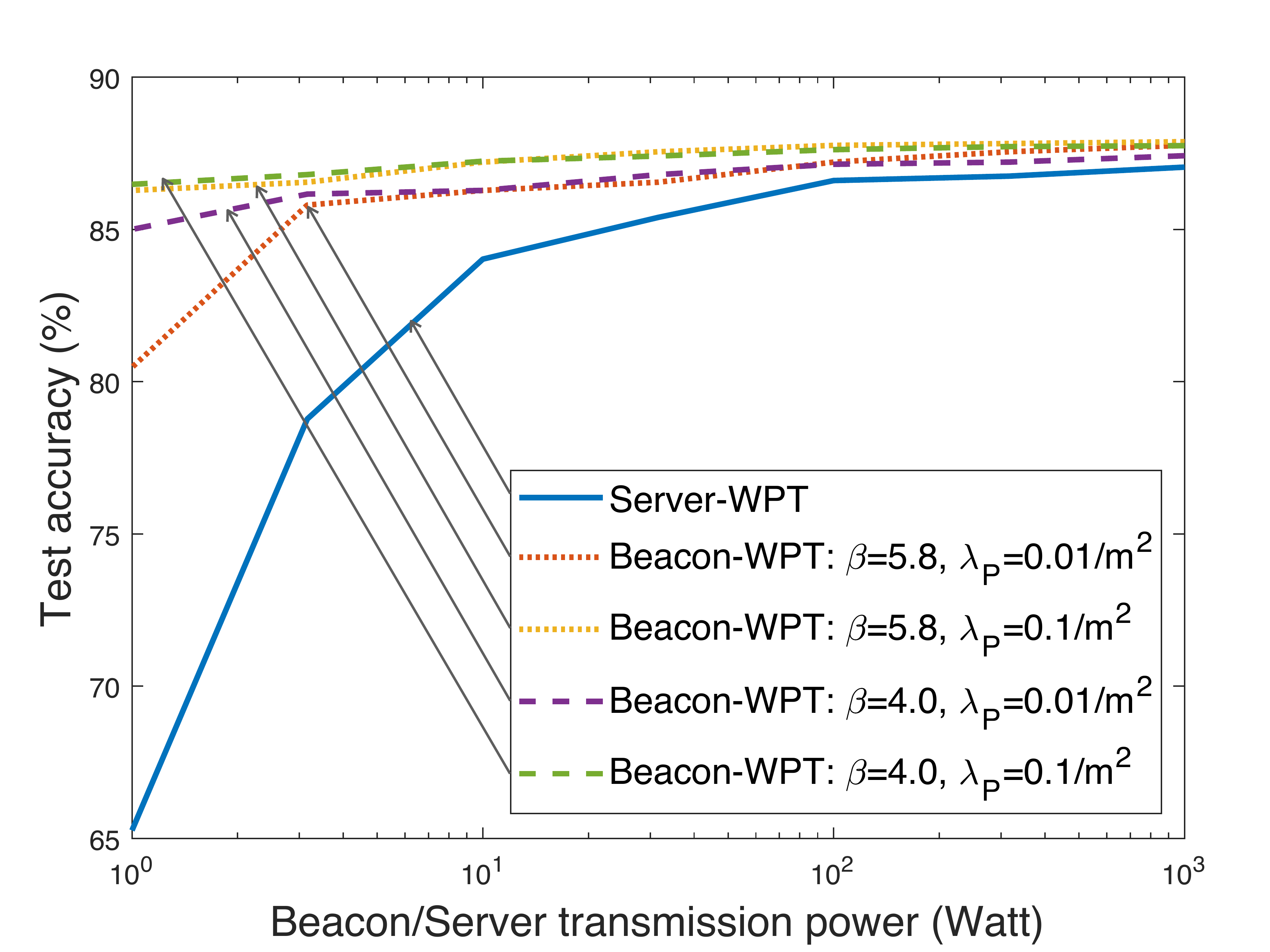}}\vspace{-1mm}
    \hspace{-2mm}
    \subfigure[Cell radius  $R=10$ m.]{
    \label{Fig: R=10}
    \includegraphics[width=8.3cm]{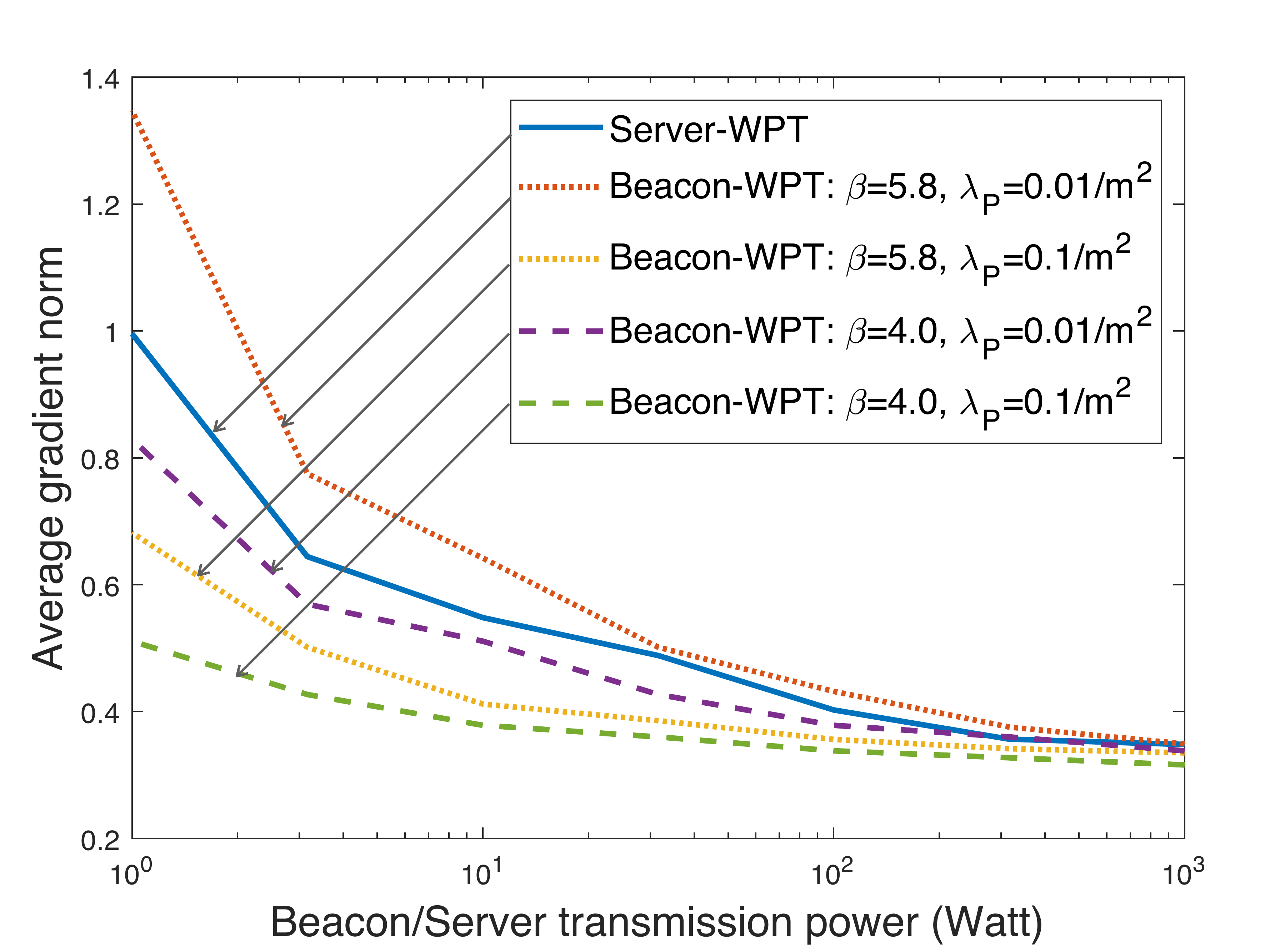}
    \includegraphics[width=8.3cm]{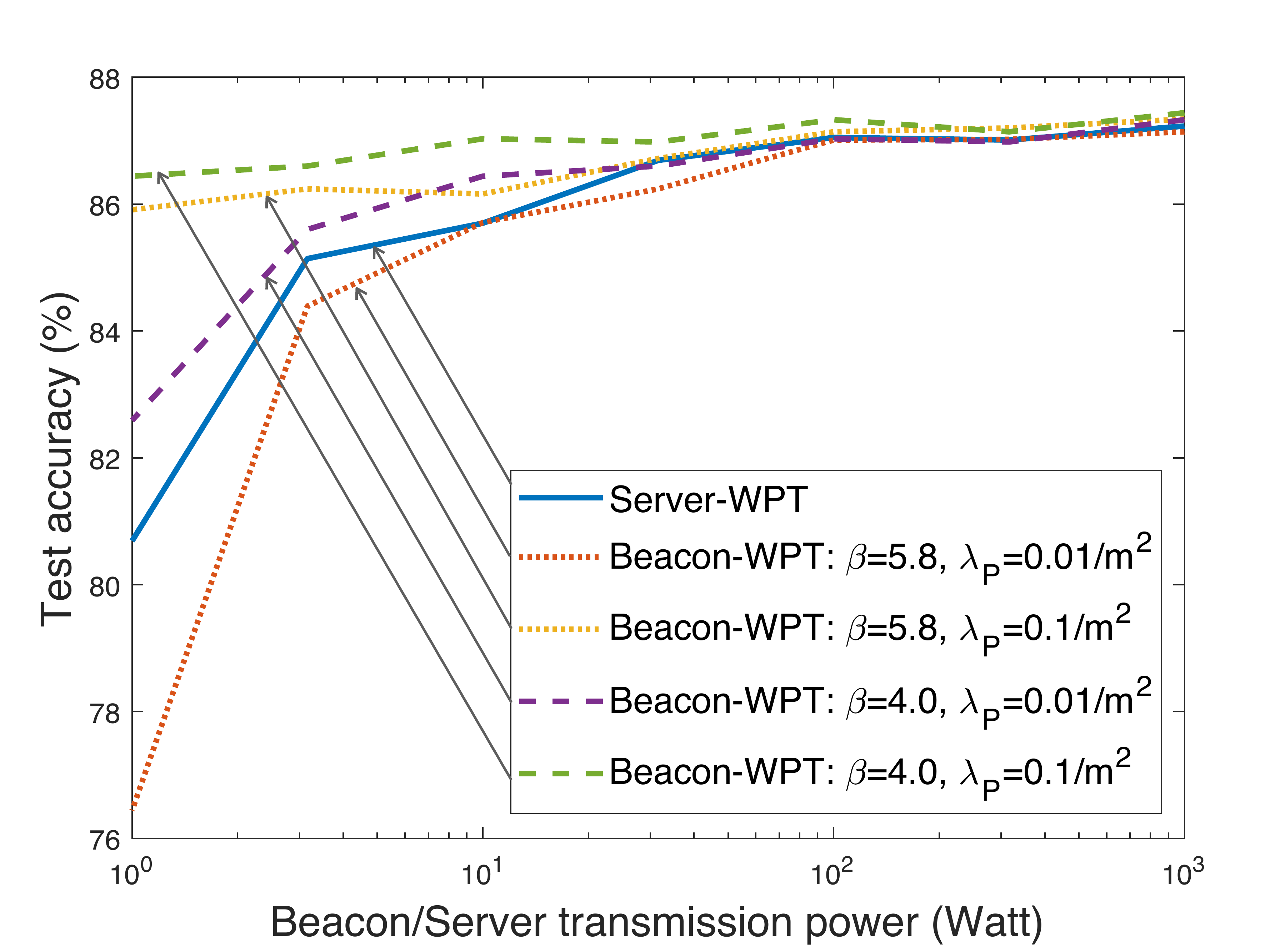}}\vspace{-1mm}
    \caption{Learning-performance comparison between beacon-WPT and server-WPT.}\vspace{-5mm}
    \label{Fig: comparison}
\end{figure}
\subsection{Comparison between Beacon-WPT and Server-WPT}
Based on experiments, the learning performance for the scenarios of beacon-WPT and server-WPT is compared  in Fig. \ref{Fig: comparison} for two different cell sizes. For the purpose of comparison,  the transmission power of power-beacons and serve are equalized $\bar{P}=P_0$ while the beacon density, $\lambda_{\text{pb}}$,  and the path-loss exponent of beacon-WPT links, $\beta$,  are varied. For a  relatively large cell ($R = 20$ m), it can be observed that beacon-WPT always outperforms server-WPT in the considered ranges of settings as the latter with only a single power source suffers from a low WPT efficiency due to  severe path loss. This suggests the need of deploying power-beacons to power devices if FEEL is deployed in a large area to involve many devices. On the other hand, for a relatively small cell ($R = 10$ m), server-WPT has a higher WPT efficiency and can outperform beacon-WPT when beacons are sparse  and/or the path-loss exponent $\beta$ is large.


\section{Concluding Remarks}
In this paper, we have proposed the application of WPT to a FEEL system as a solution for the practical issue of high energy consumption at devices. To study the performance of the resultant WP-FEEL system, we have analyzed the optimal learning-WPT tradeoffs for both the scenarios of beacon-WPT and server-WPT. The results contribute useful insight and algorithms for designing and deploying WP-FEEL systems. This first study of WP-FEEL opens several directions for further research including the application of WPT to support the implementation of other edge-learning frameworks (e.g., parameter server and reinforcement learning), the use of more complex  wireless techniques (e.g., over-the-air aggregation and radio resource management), and  the design of multi-cell networks.

\appendix


\subsection{Proof of Proposition \ref{proposition: convergence rate}}\label{proof: convergence rate}
According to the update rule in \eqref{eqn: update rule} and Assumption \ref{assumption: smoothness}, we have
\begin{align}\label{A2}
    F(\bw^{(i+1)})-F(\bw^{(i)})&\leq \langle \nabla F(\bw^{(i)}),\bw^{(i+1)}-\bw^{(i)}\rangle +\frac{\mu}{2}\|\bw^{(i+1)}-\bw^{(i)}\|^2\nonumber\\
    &=-\eta\langle\nabla F(\bw^{(i)}),\bg^{(i)}\rangle+\frac{\mu\eta^2}{2}\|\bg^{(i)}\|^2.
\end{align}
Using $\|\bg^{(i)}\|^2=\|\bg^{(i)}-\nabla F(\bw^{(i)})\|^2-\|\nabla F(\bw^{(i)})\|^2+2\langle\nabla F(\bw^{(i)}),\bg^{(i)}\rangle$, we can derive
\begin{align}\label{B1}
    F(\bw^{(i+1)})-F(\bw^{(i)})&\leq\frac{\mu\eta^2}{2}\l(\|\bg^{(i)}-\nabla F(\bw^{(i)})\|^2-\|\nabla F(\bw^{(i)})\|^2\r)+(\mu\eta-1)\eta\langle\nabla F(\bw^{(i)}),\bg^{(i)}\rangle\nonumber\\
    &=\frac{\mu\eta^2}{2}\|\bg^{(i)}-\nabla F(\bw^{(i)})\|^2+\l(\frac{\mu\eta}{2}-1\r)\eta\|\nabla F(\bw^{(i)})\|^2\nonumber\\
    &\quad -(1-\mu\eta)\eta\langle\nabla F(\bw^{(i)}),\bg^{(i)}-\nabla F(\bw^{(i)})\rangle.
\end{align}
The third term at the right-hand side in \eqref{B1} can be upper bounded as
\begin{equation}\label{B2}
    -\langle\nabla F(\bw^{(i)}),\bg^{(i)}-\nabla F(\bw^{(i)})\rangle\leq\frac{1}{2}\l(\|\nabla F(\bw^{(i)})\|^2+\|\bg^{(i)}-\nabla F(\bw^{(i)})\|^2\r).
\end{equation}
Given $0<\eta\leq\frac{1}{\mu}$, we substitute (\ref{B2}) into (\ref{B1}) and rearrange the result, yielding
\begin{equation}
    \|\nabla F(\bw^{(i)})\|^2\leq\frac{2\l(F(\bw^{(i)})-F(\bw^{(i+1)})\r)}{\eta}+\|\bg^{(i)}-\nabla F(\bw^{(i)})\|^2.
\end{equation}
It follows that
\begin{align}\label{A5}
    \mE\l[\frac{1}{N}\sum_{i=0}^{N-1}\|\nabla F(\bw^{(i)})\|^2\r]&\leq\frac{2\l(F(\bw^{(0)})-\mE \l[F(\bw^{(N)})\r]\r)}{\eta N} +\frac{1}{N}\sum_{i=0}^{N-1}\mE\l[\|\bg^{(i)}-\nabla F(\bw^{(i)})\|^2\r]\nonumber\\
    &\leq \frac{2\l(F(\bw^{(0)})-F_*\r)}{\eta N} +\frac{1}{N}\sum_{i=0}^{N-1}\mE\l[\|\bg^{(i)}-\nabla F(\bw^{(i)})\|^2\r],
\end{align}
where the second inequality in \eqref{A5} follows Assumption \ref{assumption: bounded loss function}. This completes the proof.

\subsection{Proof of Lemma~\ref{proposition: global gradient deviation}}\label{proof: global gradient deviation}
The aggregated global gradient can be written using the indicator function $\mI$ as
\begin{equation}
    \bg=\frac{1}{M}\sum\nolimits_{k\in\cM}\bg_k=\frac{1}{M}\sum\nolimits_{k=1}^K\mI_{k\in\cM}\bg_k,~\text{if}~M>0; ~\text{and}~\bg=\mathbf{0},~\text{otherwise}.
\end{equation}
For tractability, we introduce the following auxiliary gradient $\tbg$ which is defined by
\begin{equation}
    \tbg=\frac{1}{K}\sum_{k=1}^K\mI_{k\in\cM}\bg_k,~\text{where}~\mI_{k\in\cM}\bg_k=0~\text{for}~k\in\cK\setminus\cM.
\end{equation}
Then, we can derive the following upper bound for $G_{\gl}$ using the auxiliary gradient:
\begin{equation}\label{C1}
\begin{aligned}
    G_{\gl}=\mE_{\cM}\l\{\mE\l[\|\tbg-\nabla F(\bw)+\bg-\tbg\|^2\r]\r\}\leq 2\mE_{\cM}\Big\{\underbrace{\mE\l[\|\tbg-\nabla F(\bw)\|^2\r]}_{\text{(a)}}+\underbrace{\mE\l[\|\bg-\tbg\|^2\r]}_{\text{(b)}}\Big\}.
\end{aligned} 
\end{equation}
In the following, we focus on terms (a) and (b) defined in \eqref{C1}, respectively.
\begin{itemize}
    \item[1) ] Since $\nabla F(\bw)=\frac{1}{K}\sum_{k=1}^{K}\nabla F_k(\bw)$, we can first decompose term (a) as follows:
    \begin{align*}
        \text{(a)}&=\mE\l[\|\tbg-\nabla F(\bw)\|^2\r]=\frac{1}{K^2}\sum_{k=1}^K\sum_{\ell=1}^K\mE\l[\langle\mI_{k\in\cM}\bg_k-\nabla F_k(\bw),\mI_{\ell\in\cM}\bg_{\ell}-\nabla F_{\ell}(\bw)\rangle\r]\nonumber\\
        &=\frac{1}{K^2}\bigg(\underbrace{\sum_{k=1}^K\mE[\|\mI_{k\in\cM}\bg_k-\nabla F_k(\bw)\|^2]}_{\text{(a1)}}+\underbrace{\sum_{k\neq\ell}\mE[\langle\mI_{k\in\cM}\bg_k-\nabla F_k(\bw),\mI_{\ell\in\cM}\bg_{\ell}-\nabla F_{\ell}(\bw)\rangle]}_{\text{(a2)}}\bigg),
    \end{align*}
    Next, we aim at finding the upper bounds for terms (a1) and (a2), respectively.
    \begin{itemize}
        \item[$\bullet$] For term (a1), we can bound it as follows:
        \begin{align}
            \text{(a1)}&=\sum_{k=1}^K\mI_{k\in\cM}\mE\l[\|\bg_k-\nabla F_k(\bw)\|^2\r]+\sum_{k=1}^K\mI_{k\in\cK\setminus\cM}\l\{\mE\l[\|\nabla F_k(\bw)\|^2\r]\r\}\nonumber\\
            &\leq\sum_{k=1}^K\mI_{k\in\cM}G_{\lo,k}+(K-M)\Phi,
        \end{align} 
        which comes from Jensen's inequality $\|\nabla F_k(\bw)\|^2=\|\mE[\bg_k]\|^2\leq\mE[\|\bg_k\|^2]\leq\Phi$. Then,
        \begin{equation}
        \begin{aligned}
            \mE_{\cM}\l[\text{(a1)}\r]&\leq\mE\l[\sum_{k\in\cM}G_{\lo,k}\r]+(K-\mE[M])\Phi.
        \end{aligned} 
        \end{equation}
        \item[$\bullet$] For term (a2), we can first divide  $\mE\l[\langle\mI_{k\in\cM}\bg_k-\nabla F_k(\bw),\mI_{\ell\in\cM}\bg_{\ell}-\nabla F_{\ell}(\bw)\rangle\r]$ into three categories $(k\neq\ell)$:
        \begin{enumerate}
            \item[i)] Case 1: $k\in\cM$ and $\ell\in\cM$. Given Assumption \ref{assumption: local gradients estimation}, $\bg_k$ and $\bg_{\ell}$ are uncorrelated, so 
            \begin{align}
                &\mE\l[\langle\mI_{k\in\cM}\bg_k-\nabla F_k(\bw),\mI_{\ell\in\cM}\bg_{\ell}-\nabla F_{\ell}(\bw)\rangle\r]\nonumber\\
                &=\mE\l[(\bg_k-\nabla F_k(\bw))^{\mathsf{T}}(\bg_{\ell}-\nabla F_{\ell}(\bw))\r]=\tr\l\{\mathsf{cov}(\bg_k,\bg_{\ell})\r\}=0.
            \end{align}
            \item[ii)] Case 2: $k\in\cM$ but $\ell\notin\cM$ (or $k\notin\cM$ but $\ell\in\cM$). Given Assumption \ref{assumption: local gradients estimation}, $\bg_k$ is an unbiased estimate of $\nabla F_k(\bw)$, resulting in
            \begin{equation}
            \begin{aligned}
                \hspace{-9mm}\mE[\langle\mI_{k\in\cM}\bg_k-\nabla F_k(\bw),\mI_{\ell\in\cM}\bg_{\ell}-\nabla F_{\ell}(\bw)\rangle]
                =-\mE[(\bg_k-\nabla F_k(\bw))^{\mathsf{T}}\nabla F_{\ell}(\bw)]=0.\!\!
            \end{aligned}
            \end{equation}
            \item[iii)] Case 3: $k\notin\cM$ and $\ell\notin\cM$. When both device $k$ and $\ell$ are outage, it holds
            \begin{equation}
            \begin{aligned}
                \mE\l[\langle\mI_{k\in\cM}\bg_k-\nabla F_k(\bw),\mI_{\ell\in\cM}\bg_{\ell}-\nabla F_{\ell}(\bw)\rangle\r]=\nabla F_k(\bw)^{\mathsf{T}}\nabla F_{\ell}(\bw).
            \end{aligned}
            \end{equation}
        \end{enumerate}
        Combining the above three cases, we can bound term (a2) as follows:
        \begin{align}
            \text{(a2)}&=\sum_{k\neq\ell}\mI_{k\notin\cM}\mI_{\ell\notin\cM}\langle\nabla F_k(\bw),\nabla F_{\ell}(\bw)\rangle\nonumber\\
            &\leq\sum_{k\neq\ell}\mI_{k\notin\cM}\mI_{\ell\notin\cM}\|\nabla F_k(\bw)\|\|\nabla F_{\ell}(\bw)\|
            \leq\sum_{k\neq\ell}\mI_{k\notin\cM}\mI_{\ell\notin\cM}\Phi.
        \end{align}
        Then, taking expectation over $\cM$, we can obtain
        \begin{equation}
        \begin{aligned}
            \mE_{\cM}\l[\text{(a2)}\r]&
            \leq\mE\l[\sum_{k\neq\ell}\mI_{k\notin\cM}\mI_{\ell\notin\cM}\r]\Phi=(K^2-K)P_{\text{out}}^2\Phi.
        \end{aligned}
        \end{equation}
    \end{itemize}
    In summary, the expected term (a) in \eqref{C1} can be upper bounded by
    \begin{align}\label{Eqn: expected (a)}
        \mE_{\cM}\l[\text{(a)}\r]\leq\frac{1}{K^2}\mE\l[\sum_{k\in\cM}G_{\lo,k}\r]+\frac{K-\mE[M]}{K^2}\Phi+\frac{K-1}{K}P_{\text{out}}^2\Phi.
    \end{align} 
    \item[2) ] For term (b) in \eqref{C1}, we can derive the upper bound as follows:
    \begin{align}
        \text{(b)}&=\mE\l[\l\|\bg-\tbg\r\|^2\r]=\mI_{M>0}\mE\l[\Big\|\l(\frac{1}{M}-\frac{1}{K}\r)\sum_{k\in\cM}\bg_k\Big\|^2\r]+\mI_{M=0}\times0\nonumber\\
        &\leq\mI_{M>0}\l(\frac{1}{M}-\frac{1}{K}\r)^2\sum_{k\in\cM}\mE\l[\|\bg_k\|^2\r]\leq \mI_{M>0}\l(\frac{1}{M}+\frac{M}{K^2}-\frac{2}{K}\r)\Phi,
    \end{align} 
    where we note that, when $M=0$, both $\bg$ and $\tbg$ are zero, thus $\mE\l[\|\bg-\tbg\|^2|M=0\r]=0$.
    Furthermore, considering $$\mE[M]=\Pr(M>0)\mE[M|M>0]+\Pr(M=0)\mE[M|M=0]=\Pr(M>0)\mE[M|M>0]$$ and $\Pr(M>0)=1-P_{\text{out}}^K$, we can bound the expected (b) as
    \begin{align}\label{Eqn: expected (b)}
        \mE_{\cM}[\text{(b)}]&\leq\Pr(M>0)\mE\l[\frac{1}{M}+\frac{M}{K}-\frac{2}{K}\Big|M>0\r]\Phi\nonumber\\
        &=\l((1-P_{\text{out}}^K)\mE\l[\frac{1}{M}\Big|M>0\r]+\frac{1}{K^2}\mE\l[M\r]-(1-P_{\text{out}}^K)\frac{2}{K}\r)\Phi.
    \end{align}
\end{itemize}
Finally, substituting the results \eqref{Eqn: expected (a)} and \eqref{Eqn: expected (b)} into \eqref{C1}, we have
\begin{align}
    \frac{1}{2}G_{\gl}&\leq\frac{1}{K^2}\mE\l[\sum_{k\in\cM}G_{\lo,k}\r]+(1- P_{\text{out}}^K)\l(\mE\l[\frac{1}{M}\Big|M>0\r]-\frac{1}{K}\r)\Phi+\frac{P_{\text{out}}^K- P_{\text{out}}^2}{K}+ P_{\text{out}}^2\Phi\nonumber\\
    &\leq\frac{1}{K^2}\mE\l[\sum_{k\in\cM}G_{\lo,k}\r]+(1- P_{\text{out}}^K)\l(\mE\l[\frac{1}{M}\Big|M>0\r]-\frac{1}{K}\r)\Phi+ P_{\text{out}}^2\Phi,
\end{align}
where we note that $P_{\text{out}}^K- P_{\text{out}}^2\leq0$ due to $K\geq2$. This completes the proof.
\subsection{Proof of Lemma \ref{lemma: outage probability}}\label{proof: outage probability}
According to Definition \ref{definition: outage event}, in the $i$-th round, the device $k$ encounters a computation-outage event if the condition $E_k^{\cmm}(r_k^{(i)},\bh_k^{(i)})\geq\bar{E}$, which is equivalent to $R^{\alpha}\|\bh_k^{(i)}\|^2\leq \Big(r_k^{(i)}\Big)^{\alpha}\frac{\phi(T^{\cmm})}{R^{-\alpha}\bar{E}}$, is true. Define the domain $\Xi=\{(r,h)\in\mR^2:R^{\alpha}h\leq r^{\alpha}\xi;~h\geq0;~0\leq r\leq R\}$ for inactive devices, where $\xi$ has the same definition as mentioned in \eqref{Eqn: xi}. Since $\bh_k^{(i)}$ and $r_k^{(i)}$ are independent, we know $f_{(r_k^{(i)},\|\bh_k^{(i)}\|^2)}(r,h)=f_{\|\bh_k^{(i)}\|^2}(h)f_{r_k^{(i)}}(r)$. Then, we can derive the outage probability:
\begin{align}
    P_{\text{out}}&=\iint_{\Xi}f_{\|\bh_k^{(i)}\|^2}(h)f_{r_k^{(i)}}(r)drdh=\int_0^R\frac{2r}{R^2}\int_0^{\l(\frac{r}{R}\r)^{\alpha}\xi}\frac{h^{L-1}e^{-h}}{\Gamma(L)}dhdr\nonumber\\
    &=\frac{2}{R^2\Gamma(L)}\int_0^R\gamma\l(L,\l(\frac{r}{R}\r)^{\alpha}\xi\r)rdr=\frac{\gamma(L,\xi)-\xi^{-\frac{2}{\alpha}}\gamma\l(L+\frac{2}{\alpha},\xi\r)}{\Gamma(L)}.
\end{align}
This completes the proof.
\subsection{Proof of Lemma \ref{lemma: optimal local-gradient deviation}}\label{proof: upper bound for aggregated local gradient deviations}
The aggregated local gradient deviations come from the active devices. Define the domain $\Theta=\{(r,h)\in\mR^2:R^{\alpha}h>r^{\alpha}\xi;~h\geq0;~0\leq r\leq R\}$ for active devices, where $\xi$ has the same definition as mentioned in \eqref{Eqn: xi}. By variable transform, we introduce a new integral variant $x\triangleq\l(\frac{r}{R}\r)^{\alpha}\xi$, so that $r=R\xi^{-\frac{1}{\alpha}}x^{\frac{1}{\alpha}}$ and $dr=\frac{1}{\alpha}R\xi^{-\frac{1}{\alpha}}x^{\frac{1}{\alpha}-1}dx$. Accordingly, the integral domain becomes $\Theta=\{(x,h)\in\mR^2:h>x;~0\leq x\leq\xi\}$. Then we can obtain
\begin{align}\label{D2}
    \mE\l[\mI_{k\in\cM}G_{\lo,k}^{\star}\r]
    &=\frac{W\sigma_k^2C_k^{\frac{1}{3}}\xi^{\frac{1}{3}}}{{T^{\cmp}}^{\frac{2}{3}}\varphi(T^{\cmm})^{\frac{1}{3}}}\iint_{\Theta}\l(\frac{h}{R^{\alpha}h-r^{\alpha}\xi}\r)^{\frac{1}{3}}\frac{h^{L-1}e^{-h}}{\Gamma(L)}\frac{2r}{R^2}drdh\nonumber\\
    &=\frac{2W\sigma_k^2C_k^{\frac{1}{3}}\xi^{\frac{1}{3}-\frac{2}{\alpha}}}{\alpha{T^{\cmp}}^{\frac{2}{3}}R^{\frac{\alpha}{3}}\varphi(T^{\cmm})^{\frac{1}{3}}\Gamma(L)}\underbrace{\iint_{\Theta}\frac{h^{L-\frac{2}{3}}e^{-h}}{(h-x)^{\frac{1}{3}}x^{1-\frac{2}{\alpha}}}dxdh}_{\text{(c)}}.
\end{align}
The integration (c) defined in \eqref{D2} can be decomposed into two terms as follows:
\begin{align}\label{E1}
    \text{(c)}=\underbrace{\int_0^{\xi}\int_0^h\frac{h^{L-\frac{2}{3}}e^{-h}}{(h-x)^{\frac{1}{3}}x^{1-\frac{2}{\alpha}}}dxdh}_{\text{(c1)}}+\underbrace{\int_{\xi}^{\infty}\int_0^{\xi}\frac{h^{L-\frac{2}{3}}e^{-h}}{(h-x)^{\frac{1}{3}}x^{1-\frac{2}{\alpha}}}dxdh}_{\text{(c2)}},
\end{align}
\begin{itemize}
    \item[1) ] For term (c1), we can derive the result as follows:
    \begin{equation}\label{E2}
        \text{(c1)}=B\l(\frac{2}{3},\frac{2}{\alpha}\r)\int_0^{\xi}h^{L+\frac{2}{\alpha}-1}e^{-h}dh=B\l(\frac{2}{3},\frac{2}{\alpha}\r)\gamma\l(L+\frac{2}{\alpha},\xi\r).
    \end{equation}
    \item[2) ] For term (c2), we can find an upper bound as follows:
    \begin{align}\label{E3}
        \text{(c2)}&=\int_{\xi}^{\infty}h^{L-1}e^{-h}\int_0^{\xi}\l(1-\frac{x}{h}\r)^{-\frac{1}{3}}x^{\frac{2}{\alpha}-1}dxdh\nonumber\\
        &\leq\int_{\xi}^{\infty}h^{L-1}e^{-h}dh\int_0^{\xi}\l(1-\frac{x}{\xi}\r)^{-\frac{1}{3}}x^{\frac{2}{\alpha}-1}dx=B\l(\frac{2}{3},\frac{2}{\alpha}\r)\xi^{\frac{2}{\alpha}}\Gamma(L,\xi).
    \end{align}
\end{itemize}
Substituting the results \eqref{E2} and \eqref{E3} into \eqref{D2}, we can derive the following upper bound:
\begin{equation}\label{E4}
    \mE\l[\mI_{k\in\cM}G_{\lo,k}^{\star}\r]\leq\frac{2W\sigma_k^2C_k^{\frac{1}{3}}B(\frac{2}{3},\frac{2}{\alpha})\xi^{\frac{1}{3}}}{\alpha{T^{\cmp}}^{\frac{2}{3}}R^{\frac{\alpha}{3}}\varphi(T^{\cmm})^{\frac{1}{3}}\Gamma(L)}\l(\xi^{-\frac{2}{\alpha}}\gamma\l(L+\frac{2}{\alpha},\xi\r)+\Gamma(L,\xi)\r).
\end{equation}
Expanding $\mE[\mI_{k\in\cM}G_{\lo,k}]$ and comparing the right-hand side with the expression of $P_{\text{out}}$, we have
\begin{equation}
    \Pr(k\in\cM)\mE\l[G_{\lo,k}^{\star}|k\in\cM\r]\leq\frac{2W\sigma_k^2C_k^{\frac{1}{3}}B(\frac{2}{3},\frac{2}{\alpha})\xi^{\frac{1}{3}}}{\alpha{T^{\cmp}}^{\frac{2}{3}}R^{\frac{\alpha}{3}}\varphi(T^{\cmm})^{\frac{1}{3}}}(1-P_{\text{out}}).
\end{equation}
Note that $\Pr(k\in\cM)=1-P_{\text{out}}$ and this completes the proof.

\subsection{Proof of Corollary \ref{corollary: upper bound of computation-outage probability}}\label{proof: upper bound of computation-outage probability}
Applying the second mean value theorem for definite integrals, we can derive (with $\tau\ll1$)
\begin{align}
P_{\text{out}}'&=\int_0^{\tau}\frac{2\mathsf{K}_0(2\sqrt{x})}{\ln{x}}\frac{x^{L-1}\l[1-(\frac{x}{\tau})^{\frac{1}{\alpha}}\r]\ln{x}}{\Gamma(L)^2}dx\nonumber\\
&=\l[\lim_{x\to0^+}\frac{2\mathsf{K}_0(2\sqrt{x})}{\ln{x}}\r]\int_0^{\tau'}\frac{x^{L-1}\l[1-(\frac{x}{\tau})^{\frac{1}{\alpha}}\r]\ln{x}}{\Gamma(L)^2}dx\nonumber\\
&\leq -\int_0^{\tau}\frac{x^{L-1}\l[1-(\frac{x}{\tau})^{\frac{1}{\alpha}}\r]\ln{x}}{\Gamma(L)^2}dx=\frac{\tau^L\l(1+2\alpha L-L(1+\alpha L)\ln{\tau}\r)}{\Gamma(L)^2L^2(1+\alpha L)^2},
\end{align}
where $\frac{2\mathsf{K}_0(2\sqrt{x})}{\ln{x}}$ is a negative monotone increasing function around $0^+$ with $\lim\limits_{x\to0^+}\frac{2\mathsf{K}_0(2\sqrt{x})}{\ln{x}}=-1$ and the number $\tau'\in(0,\tau]$. This completes the proof.

\bibliography{WPT.bib}
\bibliographystyle{IEEEtran}

\end{document}